\documentclass[10pt,journal]{IEEEtran}

\usepackage{booktabs}
\usepackage{array}
\ifCLASSOPTIONcompsoc
  \usepackage[nocompress]{cite}
\else
  \usepackage{cite}
\fi
\usepackage{amsmath,amssymb,amsfonts}
\usepackage{mathrsfs}
\usepackage{graphicx}
\ifCLASSOPTIONcompsoc
  \usepackage[caption=false,font=footnotesize,labelfont=sf,textfont=sf]{subfig}
\else
  \usepackage[caption=false,font=footnotesize]{subfig}
\fi
\usepackage{textcomp}
\usepackage{xcolor}
\usepackage{url}
\usepackage{orcidlink}
\hypersetup{hidelinks}
\usepackage[linesnumbered,ruled,vlined]{algorithm2e}

\usepackage{enumitem}

\usepackage{amsthm}

\usepackage{cleveref}

\newtheorem{lemma}{Lemma}

\newcommand{\inlineheading}[1]{\vspace{2mm}\noindent\textbf{#1}\par\vspace{1mm}}
\begin{document}

\title{Maximizing Parallel Execution of Series-Parallel Task Graphs for Safety-Critical Embedded Control}

\author{Jinghao~Sun~\orcidlink{0000-0003-4107-1123},
        Zhenchu~Hu~\orcidlink{0009-0000-7008-6055},
        Ye~Ma~\orcidlink{0000-0002-2086-5648},
        Bo~Tang,
        Qingxu~Deng~\orcidlink{0000-0002-5185-6306},
        and~Xiuzhen~Cheng~\orcidlink{0000-0001-9998-2398}%
\thanks{Jinghao Sun and Xiuzhen Cheng are with the School of Computer Science
and Technology, Shandong University, Qingdao 266237, China
(e-mail: jhsun@sdu.edu.cn; xzcheng@sdu.edu.cn).}%
\thanks{Zhenchu Hu and Ye Ma are with the School of Computer
Science and Technology, Dalian University of Technology, Dalian 116024, China
(e-mail: huzhenchu@mail.dlut.edu.cn; yema@dlut.edu.cn).}%
\thanks{Bo Tang is with Weichai Power Co., Ltd., Weifang 261061, China. (e-mail: b.tang@hotmail.com)}%
\thanks{Qingxu Deng is with the School of Computer Science and Engineering,
Northeastern University, Shenyang 110004, China
(e-mail: dengqx@mail.neu.edu.cn).}%
\thanks{Corresponding author: Ye Ma (e-mail: yema@dlut.edu.cn).}}

\maketitle

\begin{abstract}
Safety-critical embedded control programs must complete each control cycle
within a bounded period. Sequential execution on conventional processors can
become a bottleneck when the dependency structure of the program contains
subtasks that could be executed concurrently. This paper studies the Maximum
Parallel Execution (MPE) problem for series-parallel task graphs under a staged
batching model: compatible tasks inside one batch execute in parallel, while
the selected batches are launched sequentially in a topological order that
preserves precedence. We formulate MPE as a weighted clique-partitioning problem
that minimizes the sum of batch execution times, with each batch cost determined
by its slowest task. To solve this problem efficiently, we propose a
Lagrangian-based Iterative Heuristic (LIH). LIH constructs a pricing-filtered
restricted pool of feasible candidate batches from singleton columns and random
greedy clique generation. It then applies Lagrangian pricing to guide column
selection and uses a repair procedure to recover a legal clique partition.
Experiments against
a weighted mixed-graph-coloring branch-and-bound baseline and a randomized
greedy baseline show that LIH matches the exact optimum in \(91.25\%\) of
comparable instances, with an average gap of \(0.073\%\) and an average runtime
of \(18.19\) ms. In the largest exact-reference node setting, the exact
baseline requires hundreds of seconds on average, whereas LIH remains below
\(50\) ms. We further present an end-to-end PLC ladder-logic case study in
which PLCOpen-style programs are converted to MPE graphs, optimized by LIH,
translated into FPGA-oriented HDL, and simulated against the original PLC scan
execution.
\end{abstract}

\begin{IEEEkeywords}
Embedded control, FPGA acceleration, Lagrangian relaxation, maximum parallel
execution, series-parallel graphs, task scheduling.
\end{IEEEkeywords}

\IEEEpeerreviewmaketitle

\section{Introduction}
\label{sec:intro}

A series-parallel (SP) task graph is a structured special case of a directed
acyclic graph (DAG). It models a computation as subtasks composed by two
fundamental operations: series composition, which chains subtasks that must
execute sequentially due to data dependencies, and parallel composition, which
groups independent subtasks that may execute concurrently. This recursive
structure arises naturally in a broad class of safety-critical embedded control
programs---PID regulation loops, sensor fusion pipelines, and fault-detection
logic---where certain operations depend on the output of others while the
remaining operations are mutually independent
\cite{valdes1979recognition,keller2011peelsched}. Given such a task graph, a
central question is: how should its subtasks be dispatched to maximize parallel
execution and thereby minimize the overall completion time (makespan)? We
refer to this as the Maximum Parallel Execution (MPE) problem on SP task
graphs, and it is the subject of this paper.

The MPE problem is motivated by timing constraints in real-time embedded
control. Safety-critical control systems---found in industrial automation,
robotics, autonomous vehicles, and high-precision instrumentation, must
complete their control algorithms within a bounded control period to guarantee
stability and responsiveness \cite{nilsson1998real,henriksson2005optimal}.
Traditionally, these control tasks are executed sequentially on
microcontrollers or digital signal processors, and decades of effort have been
devoted to optimizing serial performance through instruction-level pipelining,
compiler optimizations, and clock-frequency scaling
\cite{dhanabalan2022scan,hennig2016towards,oliveira2024shared}. These serial
techniques are increasingly constrained by power, frequency, and memory latency
limits. At the same time, modern applications often demand shorter control
periods, for example when faster physical dynamics or finer-grained actuation
require a loop to run at millisecond or sub-millisecond scale. Exposing
task-level parallelism is therefore an important way to reduce the time spent
in one control cycle.

Field-Programmable Gate Arrays (FPGAs) are a natural platform for this form of
parallelization. Prior implementations of industrial control, predictive
current control, model predictive control, and IEC~61131-3 timer logic show
that typical embedded control functions can be mapped to FPGA hardware with
moderate resource usage and deterministic timing behavior
\cite{monmasson2007fpga,sankar2021fpga,lucia2017optimized,chmiel2023fpga}.
Beyond their familiar role in neural-network acceleration
\cite{mittal2020survey,nurvitadhi2017can}, FPGAs are attractive for embedded
control because they combine hardware-level determinism, reconfigurability,
low power consumption, and robustness in harsh operating environments
\cite{monmasson2007fpga,bernardeschi2015sram}. Under this resource-abundant
setting, the main question is no longer processor allocation, but structure:
which tasks can be batched together without violating dependencies, and in what
staged order should those batches be executed?

Compared with scheduling on arbitrary DAGs, a well-studied but NP-hard problem
in general \cite{ref_dag_nphard}, the SP restriction admits significantly more
efficient solutions. SP graphs possess a recursive decomposition tree that
mirrors their series-parallel construction, and this structure can be exploited
algorithmically \cite{valdes1979recognition,keller2011peelsched,ref_messi}.
Yet, existing work on SP graph scheduling has largely focused on
resource-constrained settings, such as limited processors, or general-purpose
parallel platforms, and does not directly address the FPGA-oriented scenario
where computation units are abundant but the goal is to maximize the
parallelism exposed from an originally sequential program
\cite{keller2011peelsched,ref_messi}. Moreover, prior studies on FPGA-based
control acceleration typically treat the task graph as a flat pipeline or rely
on ad hoc manual partitioning, without systematically exploiting the SP
structure inherent in control programs \cite{dhanabalan2022scan}.

This paper formalizes the MPE problem on SP task graphs for FPGA platforms
under the assumption of sufficient computation units, and proposes an efficient
\textsc{Lagrangian-based Iterative Heuristic} (LIH) for generating static
staged batching plans. Given a control program with an SP-like dependency
structure, LIH exposes within-stage parallelism while preserving precedence
consistency between stages.

The main contributions are as follows:
\begin{itemize}
    \item We formulate MPE on an augmented series-parallel task graph that
    combines precedence arcs, compatibility edges, weighted task costs, and
    staged batching semantics for FPGA-oriented execution.
    \item We express MPE as a weighted clique-partitioning problem with
    contracted-graph acyclicity, giving both an exact column model and a direct
    interpretation in which each selected clique is one parallel hardware
    stage.
    \item We propose LIH, which uses pricing-filtered candidate generation,
    Lagrangian reduced-cost pricing, feasibility repair, and local refinement
    to avoid enumerating the exponential space of feasible batches.
    \item We evaluate LIH against an exact weighted mixed-graph-coloring
    baseline and a randomized greedy baseline, and further demonstrate the
    complete PLC ladder-to-HDL workflow through parsing, MPE optimization,
    Verilog generation, simulation, and scan-cycle comparison.
\end{itemize}

The remainder of this paper is organized as follows. Section~\ref{sec:model}
presents the system model and problem formulation.
Section~\ref{sec:method7-lagrangian} describes the proposed
Lagrangian-based iterative heuristic. Section~\ref{sec:exp} reports
experimental results. Section~\ref{sec:case-study} shows how a PLC ladder
program is mapped to the MPE graph and how a batching plan can be realized on
FPGA-oriented hardware. Related work is discussed in Section~\ref{sec:related}.
Section~\ref{sec:conclusion} concludes the paper.

\section{System Model and Problem Formulation}
\label{sec:model}

This section formulates Maximum Parallel Execution (MPE) as a
clique-partitioning problem on an augmented series-parallel task graph (SPG).
We first define the graph model and staged batching semantics, and then give
the exact integer formulation used as the reference problem.

\subsection{Series-Parallel Task Graph}

An SPG describes a computation built from series and parallel composition.
Series composition imposes an execution order, whereas parallel composition
places subgraphs in the same logical region without an internal ordering
relation. This structure appears naturally in embedded control programs, where
serial data dependencies coexist with independent sensing, filtering,
diagnosis, and actuation branches that can be mapped to parallel FPGA logic.

Let the input SPG be represented as a mixed graph\footnote{Strictly speaking,
the model used in this paper is an augmented SPG rather than a conventional SP
graph. The series-parallel restriction is imposed on the directed precedence
component \((V,A)\); the model additionally includes undirected compatibility
edges \(E\), vertex execution times \(w_v\), and the staged-batching semantics
with contracted-graph acyclicity used for MPE.}
\[
    G=(V,E,A),
\]
where \(V\) is the set of computational tasks, \(E\) is the set of undirected
compatibility edges, and \(A\) is the set of directed precedence arcs. Each
vertex \(v\in V\) represents an atomic operation, a function block, or a
higher-level control action, and has a positive execution time \(w_v>0\). The
directed graph \((V,A)\) is acyclic and follows the series-parallel precedence
structure. A directed arc \((u,v)\in A\) means that task \(u\) must finish
before task \(v\) can start. An undirected edge \(\{u,v\}\in E\) means that
\(u\) and \(v\) are mutually compatible and may be placed in the same parallel
execution batch.

The compatibility relation collects implementation conditions that cannot be
represented by precedence alone, such as conflicting state updates, exclusive
I/O access, or hardware resources that cannot be duplicated. Resource
attributes such as LUT, DSP, BRAM, memory-port, or safety-partition usage can
also be encoded in \(E\). In this paper we assume sufficient FPGA computation
units for any selected batch, so the limiting factor is the dependency and
compatibility structure of \(G\). Under this concurrent execution semantics,
the duration of a selected batch is determined by its slowest task.

\subsection{Clique Partitioning and Parallelization Semantics}

Partitioning turns graph-level parallelism into an executable sequence of
parallel batches. Write \(u\leadsto_A v\) if there is a directed path from
\(u\) to \(v\) in \((V,A)\). A subset \(S\subseteq V\) is a feasible batch if
its vertices are pairwise compatible and have no direct or transitive
precedence relation:
\[
    \{u,v\}\in E,\quad
    u\not\leadsto_A v,\quad
    v\not\leadsto_A u,
    \qquad
    \forall u,v\in S,\ u\neq v .
\]
Equivalently, \(S\) is a clique in the compatibility graph that does not
collapse any precedence path. The execution cost of \(S\) is
\begin{equation*}
    c(S)=\max_{v\in S} w_v .
\end{equation*}
In the FPGA deployment, one batch corresponds to one execution stage: all tasks
in \(S\) are enabled together, and inter-stage registers hold values required
by later stages. Compared with serial execution, the local saving of a
non-singleton batch is \(\sum_{v\in S}w_v-\max_{v\in S}w_v\). Singleton
batches remain available for tasks that cannot be legally or profitably merged.

A clique partition \(\mathcal{C}=\{C_1,\ldots,C_m\}\) assigns every vertex to
exactly one feasible batch:
\[
    C_i\cap C_j=\emptyset\ (i\neq j),
    \qquad
    \bigcup_{i=1}^{m} C_i=V .
\]
We use staged sequential batching semantics. Tasks in the same batch execute
concurrently, while batches are issued one at a time by a controller according
to a topological order of the contracted graph. This preserves the sequential
effects of the original program while exploiting spatial parallelism inside
each stage.

After each \(C_i\) is contracted into a super-vertex, every original arc
\((u,v)\in A\) with \(u\in C_i\), \(v\in C_j\), and \(i\neq j\) induces an arc
\(C_i\rightarrow C_j\). This contraction replaces several fine-grained tasks
by one coarse-grained parallel stage. The partition is legal only if the
contracted graph \(G/\mathcal{C}\) is acyclic, which ensures that the selected
batches admit at least one staged execution order. Fig.~\ref{fig:shrink}
shows an infeasible partition whose contraction creates a directed cycle.

\begin{figure}[htb]
    \centering
    \includegraphics[width=\linewidth]{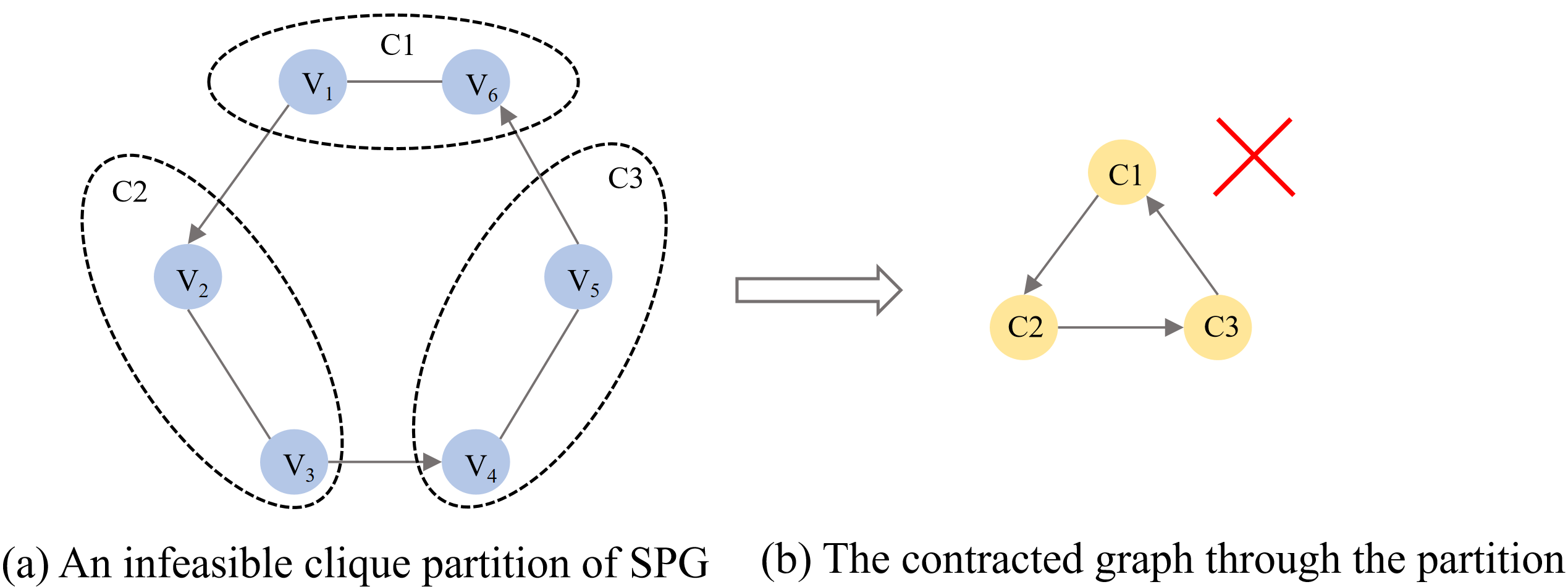}
    \caption{An infeasible clique partition whose contracted graph contains a directed cycle.}
    \label{fig:shrink}
\end{figure}

For a legal partition \(\mathcal{C}\), the makespan under the staged batching
semantics is the sum of the selected batch costs:
\begin{equation}
    T(\mathcal{C})
    =
    \sum_{C\in\mathcal{C}} c(C)
    =
    \sum_{C\in\mathcal{C}}\max_{v\in C}w_v .
    \label{eq:model-partition-cost}
\end{equation}
Thus, maximizing parallel execution is equivalent to minimizing the additive
cost of legal batches. In Fig.~\ref{fig:spg-partition}, for example, the
partition \(C_1=\{v_1,v_5\}\) and \(C_2=\{v_2,v_3,v_4\}\) has costs
\(15\) and \(25\), giving \(T(\mathcal{C})=40\).

\begin{figure}[!t]
    \centering
    \includegraphics[width=\linewidth]{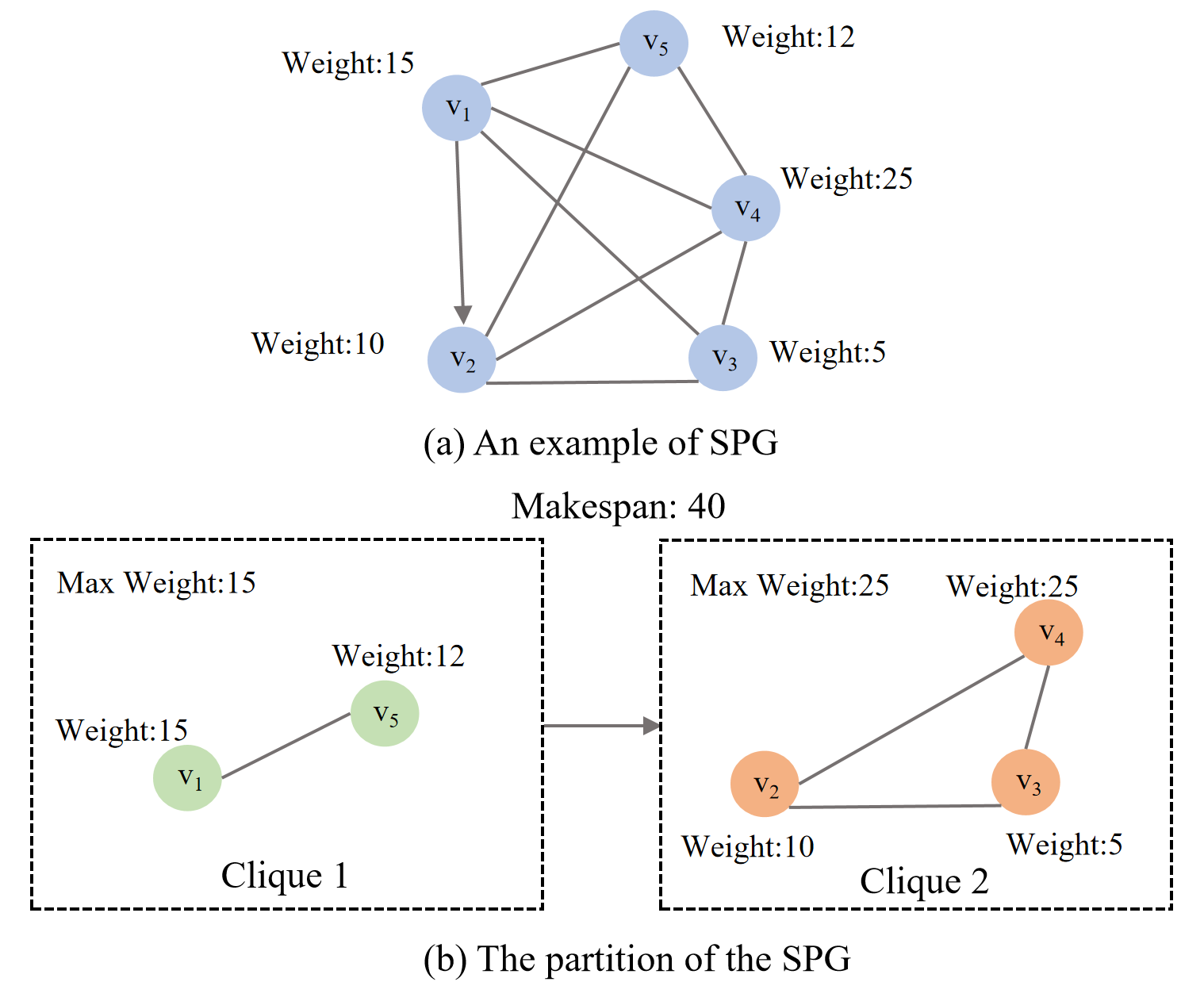}
    \caption{An example SPG and a feasible clique partition.}
    \label{fig:spg-partition}
\end{figure}

\subsection{Maximum Parallel Execution Formulation}

Let \(\mathcal{P}^{\star}\) be the unrestricted set of all feasible batches in
\(G\). It contains every feasible clique satisfying compatibility and
no-internal-path conditions. Fig.~\ref{fig:pstar-example} illustrates
\(\mathcal{P}^{\star}\) for the same SPG.

\begin{figure}[!t]
    \centering
    \includegraphics[width=0.86\linewidth]{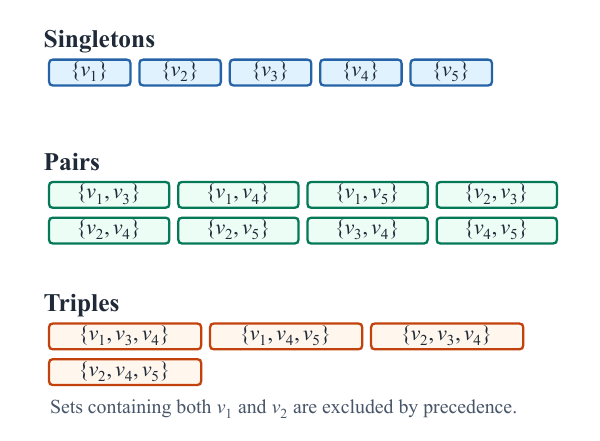}
    \caption{Example of the unrestricted feasible batch set
    \(\mathcal{P}^{\star}\) for the SPG in Fig.~\ref{fig:spg-partition}.}
    \label{fig:pstar-example}
\end{figure}

Each candidate batch is represented as a \emph{column}
\(p\in\mathcal{P}^{\star}\) with vertex set \(S_p\subseteq V\), cost
\begin{equation*}
    c_p=\max_{v\in S_p}w_v ,
\end{equation*}
and incidence coefficient
\begin{equation*}
    a_{vp} =
    \begin{cases}
        1, & \text{if } v\in S_p,\\
        0, & \text{otherwise}.
    \end{cases}
\end{equation*}
The binary variable \(z_p\) indicates whether column \(p\) is selected:
\begin{equation*}
    z_p =
    \begin{cases}
        1, & \text{if } S_p \text{ is selected as a batch},\\
        0, & \text{otherwise}.
    \end{cases}
\end{equation*}

The Maximum Parallel Execution (MPE) problem is then formulated as
\begin{align}
    T^{\star}(G)=
    \min_{z}\quad
        & \sum_{p\in\mathcal{P}^{\star}} c_p z_p
        \label{eq:model-mpe-obj} \\
    \text{s.t.}\quad
        & \sum_{p\in\mathcal{P}^{\star}} a_{vp}z_p = 1,
          && \forall v\in V,
        \label{eq:model-mpe-cover} \\
        & G/\{S_p:z_p=1\}\text{ is acyclic},
        \label{eq:model-mpe-acyc} \\
        & z_p\in\{0,1\},
          && \forall p\in\mathcal{P}^{\star}.
        \notag
\end{align}
Constraint~\eqref{eq:model-mpe-cover} is the exact coverage constraint: every
task must be assigned to one and only one selected batch.
Constraint~\eqref{eq:model-mpe-acyc} preserves global precedence consistency
after batch contraction. The objective~\eqref{eq:model-mpe-obj} minimizes the
total execution time of the selected parallel batches.

\subsection{Challenges for Parallelization}

Although exact, the formulation is difficult to solve directly. It contains
set-partitioning and clique-partitioning structure, which is NP-hard in
general~\cite{ref_dag_nphard}, and the column set
\(\mathcal{P}^{\star}\) can be exponentially large. The selected columns must
also cover every vertex exactly once and yield an acyclic contracted graph, so
local clique feasibility alone is insufficient.

The next section therefore introduces a Lagrangian-based iterative heuristic.
Instead of enumerating all of \(\mathcal{P}^{\star}\), it works with a
restricted candidate set \(\mathcal{P}\subseteq\mathcal{P}^{\star}\) while
retaining the same column notation \(S_p\), \(c_p\), \(a_{vp}\), and \(z_p\).
The exact coverage equality~\eqref{eq:model-mpe-cover} then becomes the source
of the Lagrangian multipliers.

\section{Lagrangian-based Iterative Heuristic}
\label{sec:method7-lagrangian}

The exact model in Section~\ref{sec:model} uses the unrestricted candidate set
\(\mathcal{P}^{\star}\), which can be exponentially large. The
Lagrangian-based Iterative Heuristic (LIH) instead works over a restricted
pool \(\mathcal{P}\subseteq\mathcal{P}^{\star}\). Each column in
\(\mathcal{P}\) is still a locally feasible clique; the restriction only
limits which feasible cliques are considered.

LIH first builds a preliminary pool from singleton and random-greedy cliques.
Because the useful cliques are unknown a priori, a warm-up Lagrangian pricing
pass ranks the generated non-singleton columns. LIH keeps all singleton columns
and only the top-ranked non-singleton columns for the main iterations, reducing
the influence of low-value cliques while preserving a trivial feasible
partition. The retained pool is then priced, repaired, and refined; repair is
needed because the relaxed selection may overlap, miss vertices, or create a
cycle after contraction.

For clarity, Table~\ref{tab:m7-notation} lists the notation used in this
section.

\begin{table*}[!t]
\caption{LIH notation.}
\label{tab:m7-notation}
\centering
\setlength{\tabcolsep}{4pt}
\renewcommand{\arraystretch}{1.06}
\begin{tabular*}{\textwidth}{@{\extracolsep{\fill}}>{\raggedright\arraybackslash}p{0.17\textwidth}
                >{\raggedright\arraybackslash}p{0.29\textwidth}
                >{\raggedright\arraybackslash}p{0.17\textwidth}
                >{\raggedright\arraybackslash}p{0.29\textwidth}@{}}
\toprule
Symbol & Meaning & Symbol & Meaning \\
\midrule
\(G=(V,E,A)\) & SPG: tasks, compatibility edges, and precedence arcs. & \(\lambda_v\), \(\lambda\), \(\lambda_v^t\), \(\lambda^{\mathrm{w}}\) & Vertex multiplier, vector, iteration-\(t\) value, and warm-up vector. \\
\(u,v\) & Task vertices. & \(L(z,\lambda)\) & Lagrangian objective. \\
\(S\), \(S_p\) & Vertex subset; subset of column \(p\). & \(\bar{c}_p(\lambda)\), \(\bar{c}(\lambda)\) & Column reduced cost and vector. \\
\(w_v\), \(w\), \(w_{\max}\) & Vertex time, weight vector, maximum weight. & \(\phi(\lambda)\), \(\phi_t\) & Dual function and current dual value. \\
\(\mathcal{P}^{\star}\), \(\mathcal{P}\) & All feasible columns; restricted pool. & \(\mathcal{P}^{-}(\lambda)\), \(\mathcal{P}^{-}\) & Negative-reduced-cost columns. \\
\(\mathcal{P}_0\), \(\mathcal{P}_{\mathrm{sg}}\), \(\mathcal{P}_{\mathrm{rg}}\) & Preliminary, singleton, and random-greedy pools. & \(h_v(\lambda)\), \(h_v\) & Relaxed coverage count. \\
\(\eta\) & Fraction of non-singleton columns retained after warm-up pricing. & \(g_v(\lambda)\), \(g_v\), \(g^t\) & Coverage violation/subgradient. \\
\(p\), \(m_0\), \(m\), \(n\), \(q\) & Column index, preliminary columns, retained columns, vertices, and maximum column size. & \(\alpha_t\), \(\rho\), \(\epsilon\) & Step size, scale, and minimum step. \\
\(m_{\max}\), \(R_{\mathrm{rg}}\), \(\xi\), \(\Delta\) & Pool cap, random-greedy rounds, seed, and optional deadline. & \([x]_{\epsilon}^{w_{\max}}\), \(x\) & Scalar clipping to \([\epsilon,w_{\max}]\). \\
\(\ell\), \(V^{\downarrow}\), \(V_{\pi}\) & Random-greedy round, ordered list, and shuffled list. & \(M\), \(\Pi_{[-M,M]}\) & Multiplier bound and projection. \\
\(C\), \(\mathcal{C}\) & Clique/batch and clique partition. & \(UB_t\), \(UB\) & Best feasible objective value. \\
\(\mathcal{C}^{0}\), \(\widehat{\mathcal{C}}\), \(\mathcal{C}^{best}\) & Initial, repaired, and best partitions. & \(\mathrm{saving}(p)\), \(\tau(p)\), \(\mathrm{score}(p)\) & Repair saving, insertion order, and ranking key. \\
\(c(S)\), \(c(C)\), \(c_p\) & Batch or column cost. & \(t\), \(T\), \(h\) & Iteration index, limit, and repair interval. \\
\(a_{vp}\) & 1 if \(v\in S_p\), 0 otherwise. & \(\|\cdot\|_2\), \(|\cdot|\), \(O(\cdot)\) & Norm, cardinality, complexity. \\
\(z_p\), \(z_p(\lambda)\) & Column selection and relaxed selection. & \(\emptyset\), \(\mathbb{R}\), \(\{0,1\}\) & Empty, real, and binary sets. \\
\(G/\{S_p:z_p=1\}\), \(G/\mathcal{C}\) & Contracted graph. & & \\
\bottomrule
\end{tabular*}
\end{table*}

Lagrangian relaxation moves coupling constraints into the objective with
multipliers~\cite{geoffrion1974lagrangean,fisher1981lagrangian}. In LIH, the
multipliers act as vertex prices: over-covered vertices become more expensive,
whereas uncovered vertices become cheaper. The pricing step therefore provides
both a dual search signal and a ranking rule for candidate cliques.

\subsection{Restricted Column Model and Relaxation}
\label{subsec:m7-column-model}

We reuse the notation introduced in Section~\ref{sec:model}. The only change
from the exact formulation is the candidate set:
\[
    \mathcal{P}\subseteq\mathcal{P}^{\star},
    \qquad
    \{\{v\}:v\in V\}\subseteq\mathcal{P}.
\]
For each \(p\in\mathcal{P}\), the vertex set \(S_p\), cost \(c_p\), incidence
coefficient \(a_{vp}\), and binary selection variable \(z_p\) have the same
meaning as in Section~\ref{sec:model}. The preliminary pool can be written as
\begin{equation*}
    \mathcal{P}_0
    =
    \mathcal{P}_{\mathrm{sg}}
    \cup
    \mathcal{P}_{\mathrm{rg}},
\end{equation*}
where \(\mathcal{P}_{\mathrm{sg}}\) contains singleton columns and
\(\mathcal{P}_{\mathrm{rg}}\) contains random greedy cliques. LIH then applies
a warm-up pricing screen to obtain the working pool
\begin{equation*}
    \mathcal{P}
    =
    \mathcal{P}_{\mathrm{sg}}
    \cup
    \operatorname{Top}_{\eta}
    \bigl(\mathcal{P}_0\setminus\mathcal{P}_{\mathrm{sg}}\bigr),
\end{equation*}
where \(\operatorname{Top}_{\eta}(\cdot)\) denotes the top \(\eta\) fraction
of non-singleton columns under the warm-up reduced-cost ranking. The singleton
columns ensure that the restricted model always has at least the serial
feasible partition.

Using these columns, the restricted column-selection model is formulated as
\begin{align}
    \min_{z}\quad
        & \sum_{p\in\mathcal{P}} c_p z_p
        \notag \\
    \text{s.t.}\quad
        & \sum_{p\in\mathcal{P}} a_{vp}z_p = 1,
          && \forall v\in V,
        \label{eq:m7-rmp-cover} \\
        & G/\{S_p:z_p=1\}\text{ is acyclic},
        \label{eq:m7-rmp-acyc} \\
        & z_p\in\{0,1\},
          && \forall p\in\mathcal{P}.
        \notag
\end{align}
Constraint~\eqref{eq:m7-rmp-cover} enforces exact coverage, and
constraint~\eqref{eq:m7-rmp-acyc} enforces acyclicity after contraction. Since
\(\mathcal{P}\) is only a subset of \(\mathcal{P}^{\star}\), the restricted
optimum may be worse than the unrestricted optimum; LIH accepts this trade-off
to avoid the full exponential column space.

\inlineheading{Relaxation Strategy}

The local feasibility of each column is guaranteed during candidate generation:
LIH checks pairwise compatibility and rejects cliques containing a directed
precedence path. Coverage, however, is global. The equality
\begin{equation*}
    \sum_{p\in\mathcal{P}} a_{vp}z_p=1
\end{equation*}
couples all columns that contain vertex \(v\).

LIH therefore relaxes the exact coverage constraints and assigns one multiplier
\(\lambda_v\) to each vertex. If the relaxed solution covers \(v\) more than
once, \(\lambda_v\) is increased, making columns containing \(v\) more
expensive. If the relaxed solution does not cover \(v\), \(\lambda_v\) is
decreased, making columns containing \(v\) more attractive. Thus, the
multipliers are adaptive vertex prices derived directly from
\eqref{eq:model-mpe-cover}.

The pricing step is therefore a search guide rather than a complete feasibility
mechanism: coverage and contracted acyclicity are restored later by repair,
while the relaxed subproblem remains separable by columns.

\subsection{Lagrangian Pricing and Multiplier Update}
\label{subsec:m7-lagrangian-derivation}

Let \(\lambda_v\in\mathbb{R}\) be the Lagrange multiplier associated with the
coverage equality of vertex \(v\). For a fixed multiplier vector
\begin{equation*}
    \lambda=(\lambda_v)_{v\in V},
\end{equation*}
the Lagrangian objective is obtained by adding the weighted coverage violation
to the original objective:
\begin{align}
    L(z,\lambda)
    &=
    \sum_{p\in\mathcal{P}} c_p z_p
    +
    \sum_{v\in V}\lambda_v
    \left(
    \sum_{p\in\mathcal{P}}a_{vp}z_p - 1
    \right)
    \notag \\
    &=
    \sum_{p\in\mathcal{P}} c_p z_p
    +
    \sum_{v\in V}
    \sum_{p\in\mathcal{P}}
        \lambda_v a_{vp}z_p
    -
    \sum_{v\in V}\lambda_v .
    \label{eq:m7-lagrangian-split}
\end{align}

Exchanging the summation order gives
\begin{align*}
    \sum_{v\in V}
    \sum_{p\in\mathcal{P}}
        \lambda_v a_{vp}z_p
    &=
    \sum_{p\in\mathcal{P}}
    \sum_{v\in V}
        \lambda_v a_{vp}z_p .
\end{align*}
Since \(a_{vp}=1\) only when \(v\in S_p\),
\begin{align}
    \sum_{v\in V}
        \lambda_v a_{vp}z_p
    &=
    \left(
        \sum_{v\in S_p}\lambda_v
    \right)z_p .
    \label{eq:m7-incidence-rewrite}
\end{align}
Substituting~\eqref{eq:m7-incidence-rewrite} into
\eqref{eq:m7-lagrangian-split} gives
\begin{align}
    L(z,\lambda)
    &=
    -\sum_{v\in V}\lambda_v
    +
    \sum_{p\in\mathcal{P}}
    \left(
        c_p+\sum_{v\in S_p}\lambda_v
    \right)z_p .
    \label{eq:m7-lagrangian-collected}
\end{align}

Equation~\eqref{eq:m7-lagrangian-collected} defines the reduced cost of column
\(p\):
\begin{equation}
    \bar{c}_p(\lambda)
    =
    c_p+\sum_{v\in S_p}\lambda_v .
    \label{eq:m7-reduced-cost}
\end{equation}

For a fixed multiplier vector \(\lambda\), the relaxed pricing problem is
\begin{equation*}
    \min_{z\in\{0,1\}^{|\mathcal{P}|}} L(z,\lambda).
\end{equation*}
Because coverage and contracted acyclicity are relaxed during pricing, each
\(z_p\) can be optimized independently:
\begin{equation}
    z_p(\lambda)=
    \begin{cases}
        1, & \text{if } \bar{c}_p(\lambda)<0,\\
        0, & \text{if } \bar{c}_p(\lambda)\geq 0 .
    \end{cases}
    \label{eq:m7-relaxed-selection}
\end{equation}
Thus, only negative-reduced-cost columns are selected in the relaxed solution.

Substituting the best independent decisions from
\eqref{eq:m7-relaxed-selection} into \eqref{eq:m7-lagrangian-collected} yields
the Lagrangian dual function
\begin{equation}
    \phi(\lambda)
    =
    -\sum_{v\in V}\lambda_v
    +
    \sum_{p\in\mathcal{P}}
    \min\{0,\bar{c}_p(\lambda)\}.
    \label{eq:m7-dual-function}
\end{equation}
LIH does not solve the dual problem to optimality. It uses \(\phi(\lambda)\)
and the reduced costs as search signals: a small reduced cost indicates either
a low clique cost, under-covered vertices, or both.

\inlineheading{Subgradient Multiplier Update}

The dual function~\eqref{eq:m7-dual-function} is piecewise linear and may be
non-differentiable. Therefore, LIH updates the multipliers by a subgradient
method.

Let
\begin{equation*}
    \mathcal{P}^{-}(\lambda)
    =
    \{p\in\mathcal{P}:\bar{c}_p(\lambda)<0\}
\end{equation*}
be the set of columns selected by the relaxed pricing rule. The coverage count
of vertex \(v\) in this relaxed solution is
\begin{equation*}
    h_v(\lambda)
    =
    \sum_{p\in\mathcal{P}^{-}(\lambda)} a_{vp}.
\end{equation*}
The corresponding coverage violation is
\begin{equation*}
    g_v(\lambda)
    =
    h_v(\lambda)-1 .
\end{equation*}
This expression is inherited directly from the relaxed coverage equality
\(\sum_{p\in\mathcal{P}}a_{vp}z_p-1=0\), so \(g_v(\lambda)\) is a valid
subgradient component for multiplier \(\lambda_v\).

The multiplier update is
\begin{equation*}
    \lambda_v^{t+1}
    =
    \Pi_{[-M,M]}
    \left(
        \lambda_v^t+\alpha_t g_v^t
    \right),
    \quad \forall v\in V,
\end{equation*}
where \(\Pi_{[-M,M]}\) denotes projection onto the interval \([-M,M]\). In the
implementation,
\begin{equation*}
    M=2w_{\max},
    \qquad
    w_{\max}=\max_{v\in V}w_v .
\end{equation*}
The projection prevents excessively large multipliers from dominating the
original clique costs and stabilizes the search.

The update follows the price interpretation: over-covered vertices
\((h_v>1)\) become more expensive, uncovered vertices \((h_v=0)\) become
cheaper, and exactly covered vertices keep the same price.

Let \(UB_t\) be the best feasible objective value found so far, and let
\begin{equation*}
    \phi_t=\phi(\lambda^t).
\end{equation*}
When \(\phi_t<UB_t\), LIH uses a Polyak-style step size:
\begin{equation}
    \alpha_t
    =
    \left[
        \rho\frac{UB_t-\phi_t}{\|g^t\|_2^2}
    \right]_{\epsilon}^{w_{\max}},
    \quad \text{if } \phi_t<UB_t,
    \label{eq:m7-polyak-step}
\end{equation}
where \(\rho>0\) is a step-scale parameter and
\begin{equation*}
    [x]_{\epsilon}^{w_{\max}}
    =
    \min\{w_{\max},\max\{\epsilon,x\}\}.
\end{equation*}
Here, \(\epsilon>0\) is a small minimum step-size parameter. It provides the
lower clipping bound, so an active multiplier update is not reduced to an
exactly zero step by the step-size rule.

The numerator \(UB_t-\phi_t\) measures the current primal-dual gap, while
\(\|g^t\|_2^2\) normalizes the update by the coverage violation.

When \(\phi_t\geq UB_t\), LIH switches to a damped fallback step:
\begin{equation}
    \alpha_t
    =
    \left[
        \rho
        \frac{w_{\max}}
        {\sqrt{t}\max\{1,\|g^t\|_2^2\}}
    \right]_{\epsilon}^{w_{\max}} .
    \label{eq:m7-damped-step}
\end{equation}
This fallback keeps the multiplier update active, while the factor
\(1/\sqrt{t}\) gradually reduces oscillation in later iterations.

\subsection{Candidate Pool, Repair, and Local Refinement}
\label{subsec:m7-column-generation}

The candidate pool is constructed once before the main Lagrangian iterations.
Singleton columns guarantee coverage and define the serial initial partition
\(\mathcal{C}^0=\{\{v\}:v\in V\}\). Algorithm~\ref{alg:m7-random-greedy} adds
locally feasible multi-vertex cliques by checking compatibility, precedence
paths, size, duplicates, and the pool cap.

The resulting preliminary pool \(\mathcal{P}_0\) may still contain many
low-value cliques. LIH therefore performs one warm-up pricing iteration and
ranks non-singleton columns by
\(\bar{c}_p(\lambda^{\mathrm{w}})/|S_p|\), where
\(\lambda^{\mathrm{w}}\) is the warm-up multiplier vector. It then keeps all
singletons and the top \(\eta\) fraction of non-singleton columns. In our
experiments, \(\eta=5\%\) is sufficient to retain nearly all columns used by
optimal partitions on exact-reference instances. The remaining task is to
select disjoint retained columns whose contraction is acyclic.

\begin{algorithm}[!htpb]
\caption{\texttt{RandomGreedyColumns} for candidate-pool generation.}
\label{alg:m7-random-greedy}
\LinesNumbered
\textbf{Input:} SPG $G=(V,E,A)$, vertex weights $w$, initial pool
$\mathcal{P}$, size bound $q$, pool cap $m_{\max}$, rounds
$R_{\mathrm{rg}}$, seed $\xi$, optional deadline $\Delta$\;
\textbf{Output:} Updated candidate pool $\mathcal{P}$\;

initialize the random generator with seed $\xi$\;
$V^{\downarrow}\leftarrow$ vertices sorted by nonincreasing $w_v$\;
\For{$\ell=1,\ldots,R_{\mathrm{rg}}$}
{
    \If{$|\mathcal{P}|\ge m_{\max}$ or deadline $\Delta$ is reached}
    {
        \textbf{break}\;
    }
    $V_{\pi}\leftarrow\texttt{RandomShuffle}(V^{\downarrow})$\;
    $S\leftarrow\emptyset$\;
    \For{each $v\in V_{\pi}$}
    {
        \If{$|S|\ge q$}
        {
            \textbf{break}\;
        }
        \If{$\{u,v\}\in E$, $u\not\leadsto_A v$, and
        $v\not\leadsto_A u$ for every $u\in S$}
        {
            $S\leftarrow S\cup\{v\}$\;
            \If{$S\notin\mathcal{P}$ and $S$ is a feasible clique}
            {
                insert $S$ into $\mathcal{P}$ and record $c(S)$ and
                $a_{vp}$\;
            }
        }
    }
}
\textbf{return} $\mathcal{P}$\;
\end{algorithm}

\inlineheading{Repairing the Relaxed Solution}

The relaxed solution from~\eqref{eq:m7-relaxed-selection} may overlap, miss
vertices, or create a cycle after contraction. LIH therefore uses reduced costs
only as guidance and repairs the selection into a legal partition.

Each candidate column \(p\) is assigned a repair score. Let
\begin{equation*}
    \mathrm{saving}(p)
    =
    \sum_{v\in S_p}w_v-c_p .
\end{equation*}
This value measures the local time saved by executing the vertices of \(S_p\)
together instead of executing them as singleton batches. Let \(\tau(p)\) denote
the insertion order of column \(p\) in the generated pool. The columns are
sorted lexicographically by
\begin{equation*}
    \mathrm{score}(p)
    =
    \left(
    \frac{\bar{c}_p(\lambda)}{|S_p|},
    -\mathrm{saving}(p),
    -|S_p|,
    c_p,
    \tau(p)
    \right).
\end{equation*}
The repair first prefers low reduced cost per vertex, then larger singleton
savings, larger columns, lower execution cost, and earlier generated columns.

The repair scans this ranked list and accepts a column only if two conditions
hold. First, all vertices in the column must still be uncovered. Second, adding
the column to the already accepted columns, while leaving all remaining
uncovered vertices as singleton columns, must keep the contracted directed
graph acyclic. The second condition is a look-ahead feasibility test: it
prevents the repair from accepting a locally attractive clique that would make
the final partition impossible to schedule.

If no multi-vertex column can be accepted, LIH inserts the heaviest remaining
vertex as a singleton, guaranteeing progress and feasibility. After exact
coverage is obtained, \texttt{LocalRefine} tries non-increasing legal vertex
migrations among cliques while preserving clique feasibility and contracted
acyclicity; repeated partition signatures are rejected to prevent cycling.
Algorithm~\ref{alg:m7-refine} summarizes the routine.

\begin{algorithm}[!htpb]
\caption{\texttt{LocalRefine}$(G,w,\mathcal{C})$.}
\label{alg:m7-refine}
\LinesNumbered
\textbf{Input:} SPG $G=(V,E,A)$, vertex weights $w$, current clique
partition $\mathcal{C}$\;
\textbf{Output:} A refined legal clique partition\;

record the current partition signature as visited\;
\While{a legal non-increasing migration is found}
{
    process source cliques in ascending order of $c(C)$\;
    process target cliques in descending order of $c(C)$\;
    process vertices inside each source clique in descending order of $w_v$\;
    \For{each candidate move $x:C_s\rightarrow C_t$}
    {
        $C_s'\leftarrow C_s\setminus\{x\}$,
        $C_t'\leftarrow C_t\cup\{x\}$\;
        let $\mathcal{C}'$ be the partition after replacing $C_s,C_t$
        by $C_s',C_t'$ and removing empty cliques\;
        \If{$C_t'$ is a feasible clique,
        $G/\mathcal{C}'$ is acyclic,
        $c(C_s')+c(C_t')\leq c(C_s)+c(C_t)$, and
        $\mathcal{C}'$ has not been visited}
        {
            accept the move and record the new signature\;
            restart the search from the updated partition\;
        }
    }
}
\textbf{return} $\mathcal{C}$\;
\end{algorithm}

\subsection{Complete Procedure, Complexity, and Limitations}
\label{subsec:m7-complete-algorithm}

Algorithm~\ref{alg:m7-lagrangian} summarizes LIH. The algorithm initializes
the singleton partition, builds the pricing-filtered restricted pool, and then
sets all main-loop multipliers to zero. Each iteration prices columns,
periodically repairs and refines a feasible partition, and updates the
multipliers from the coverage violation.
The repair interval \(h\), step parameters \(\rho,\epsilon\), column-size bound
\(q\), pool-generation parameters \(m_{\max},R_{\mathrm{rg}},\xi,\Delta\), and
screening fraction \(\eta\) control the search.

\begin{algorithm}[!htpb]
\caption{Lagrangian-based Iterative Heuristic (LIH) for SPG clique partitioning.}
\label{alg:m7-lagrangian}
\LinesNumbered
\textbf{Input:} SPG $G=(V,E,A)$, vertex weights $w$, iteration limit $T$,
repair interval $h$, step parameters $\rho,\epsilon$, column-size bound
$q$, pool cap $m_{\max}$, random-greedy rounds $R_{\mathrm{rg}}$, seed
$\xi$, screening fraction $\eta$, optional deadline $\Delta$\;
\textbf{Output:} A legal clique partition $\mathcal{C}^{best}$\;

$\mathcal{C}^{0}\leftarrow\{\{v\}:v\in V\}$\;
$UB\leftarrow\sum_{C\in\mathcal{C}^{0}}c(C)$\;
$\mathcal{C}^{best}\leftarrow\mathcal{C}^{0}$\;
$\mathcal{P}_{\mathrm{sg}}\leftarrow\{\{v\}:v\in V\}$\;
$\mathcal{P}_{0}\leftarrow
\texttt{RandomGreedyColumns}(G,w,\mathcal{P}_{\mathrm{sg}},q,m_{\max},$
$R_{\mathrm{rg}},\xi,\Delta)$\;
run one warm-up pricing iteration on $\mathcal{P}_{0}$ and rank
$\mathcal{P}_{0}\setminus\mathcal{P}_{\mathrm{sg}}$ by
$\bar{c}_p(\lambda^{\mathrm{w}})/|S_p|$\;
$\mathcal{P}\leftarrow
\mathcal{P}_{\mathrm{sg}}\cup
\operatorname{Top}_{\eta}(\mathcal{P}_{0}\setminus\mathcal{P}_{\mathrm{sg}})$\;
$\lambda_v\leftarrow0$, for all $v\in V$\;

\For{$t=1,\ldots,T$}
{
    Compute $\bar{c}_p(\lambda)$ for all $p\in\mathcal{P}$ by Eq.~\eqref{eq:m7-reduced-cost}\;
    $\mathcal{P}^{-}\leftarrow\{p\in\mathcal{P}:\bar{c}_p(\lambda)<0\}$\;
    Compute $\phi(\lambda)$ by Eq.~\eqref{eq:m7-dual-function}\;

    \If{$t=1$, or $t=T$, or $t\bmod h=0$, or $\mathcal{P}^{-}=\emptyset$}
    {
        $\widehat{\mathcal{C}}\leftarrow\texttt{Repair}(\mathcal{P},\bar{c}(\lambda))$\;
        $\widehat{\mathcal{C}}\leftarrow
        \texttt{LocalRefine}(G,w,\widehat{\mathcal{C}})$\;
        \If{$\widehat{\mathcal{C}}$ is legal and
        $\sum_{C\in\widehat{\mathcal{C}}}c(C)<UB$}
        {
            $\mathcal{C}^{best}\leftarrow\widehat{\mathcal{C}}$\;
            $UB\leftarrow\sum_{C\in\widehat{\mathcal{C}}}c(C)$\;
        }
    }

    $h_v\leftarrow\sum_{p\in\mathcal{P}^{-}}a_{vp}$, for all $v\in V$\;
    $g_v\leftarrow h_v-1$, for all $v\in V$\;
    \If{$\|g\|_2^2=0$}
    {
        \textbf{break}\;
    }
    Compute $\alpha_t$ by Eq.~\eqref{eq:m7-polyak-step} or Eq.~\eqref{eq:m7-damped-step}\;
    $\lambda_v\leftarrow\Pi_{[-M,M]}(\lambda_v+\alpha_t g_v)$, for all $v\in V$\;
}
\textbf{return} $\mathcal{C}^{best}$\;
\end{algorithm}

\inlineheading{Complexity and Limitations}

We state the time bound in Lemma~\ref{lem:lih-time}. Let
\(n=|V|\), \(m_0=|\mathcal{P}_0|\) be the preliminary-pool size,
\(m=|\mathcal{P}|\) be the retained working-pool size, \(m_d=|A|\), and let
\(q\) be the maximum candidate-column size. The pricing screen gives
\(m\le n+\lceil\eta(m_0-n)\rceil\). Let \(T\) be the number of Lagrangian
iterations, \(h\) the repair interval,
\(r_{\mathrm{rep}}=O(T/h)\) the number of repair calls, \(b\) the repair scan
width, \(c\le n\) the number of cliques in the current partition, and \(L\)
the number of local-refinement passes after one repair.

\begin{lemma}
\label{lem:lih-time}
Under the above notation, Algorithm~\ref{alg:m7-lagrangian} runs in
\[
    \begin{aligned}
    O\bigl(&R_{\mathrm{rg}}nq
    +m_0(q^2+q+\log m_0)
    +T(mq+n)\\
    &+r_{\mathrm{rep}}\bigl(mq\log m+nb(q+n+m_d)\\
    &\qquad+Lc^2q(q+n+m_d)\bigr)\bigr).
    \end{aligned}
\]
Since \(c\le n\), the refinement term is
\(O(Ln^2q(q+n+m_d))\) in the worst case.
\end{lemma}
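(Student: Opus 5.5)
The bound is additive over the phases of Algorithm~\ref{alg:m7-lagrangian}, so I will charge each phase separately and sum the charges. Two preprocessing assumptions are made explicit. First, the transitive precedence relation $\leadsto_A$ is available through a reachability oracle answering in $O(1)$. Second, every contraction-acyclicity test is a DFS or Kahn pass over the contracted graph, which has at most $n$ vertices and $m_d$ arcs, so each test costs $O(n+m_d)$. If reachability must be computed, an $O(n(n+m_d))$ transitive closure can be absorbed into the repair term whenever a repair call occurs. I would remark on this rather than add a separate term.

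\emph{Pool generation and screening.} Building $V^{\downarrow}$ costs $O(n\log n)$. Each of the $R_{\mathrm{rg}}$ rounds shuffles in $O(n)$ and scans $n$ vertices. The scan stops once $|S|=q$, and each candidate $v$ is tested against at most $q$ members of $S$. This gives $O(R_{\mathrm{rg}}nq)$; the sorting term is dominated, or can be charged once. For each of the $m_0$ stored columns, the feasibility and duplicate check and the recording of $c(S)$ and $a_{vp}$ cost $O(q^2+q)$: the $q^2$ comes from the pairwise clique test, and the $q$ from hashing and recording. These insertions are counted per stored column rather than per round. The warm-up pricing computes each reduced cost in $O(q)$, and ranking costs $O(m_0\log m_0)$. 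Together these give the term $m_0(q^2+q+\log m_0)$.

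\emph{Main iterations.} Each iteration does the following:
\begin{itemize}
\item computes all $\bar c_p$ in $O(mq)$;
\item forms $\mathcal{P}^-$ and $\phi$ in $O(m+n)$;
\item accumulates $h_v$ in $O(mq)$;
\item computes $g$, $\|g\|_2^2$, the step size, and the projected update in $O(n)$.
\end{itemize}
This yields $T(mq+n)$.

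\emph{Repair and refinement.} For each of the $r_{\mathrm{rep}}$ calls, sorting by score costs $O(m\log m)$ comparisons. Evaluating the score components costs $O(q)$ per column, which I bound loosely by $O(mq\log m)$. The scan examines at most $b$ columns per acceptance round. Each examination checks uncoveredness in $O(q)$ and runs a look-ahead acyclicity test in $O(n+m_d)$. There are at most $n$ rounds, since each accepts a column or a singleton and covers at least one vertex. This gives $nb(q+n+m_d)$.

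For \texttt{LocalRefine}, one pass ranges over ordered source/target clique pairs, $O(c^2)$ of them, and over at most $q$ vertices $x$ per source clique. Each candidate move is tested as follows:
\begin{itemize}
\item clique feasibility of $C_t'$ in $O(q)$, since $x$ is checked against at most $q$ members;
\item acyclicity in $O(n+m_d)$;
\item the cost comparison in $O(q)$;
\item the signature lookup, amortized into the same bound.
\end{itemize}
This gives $O(c^2q(q+n+m_d))$ per pass, and $L$ passes give the stated term. Substituting $c\le n$ yields the worst-case remark.

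The step I expect to be delicate is this refinement accounting. The ``restart the search'' semantics must be absorbed into the definition of $L$ as the number of passes, each of which ends at an accepted move or at termination. The per-source bound of $q$ vertices also needs justification: it holds because cliques produced by repair are pool columns (size $\le q$) or singletons. Migrations could in principle grow a target clique beyond $q$, so I would either impose the size bound $q$ in the feasibility test or interpret $q$ as the maximum clique size encountered. The remaining terms are routine sums.
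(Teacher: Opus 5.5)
Your proposal is correct and follows the paper's own argument. The paper charges the same phases and sums them: pool generation and warm-up screening, $O(mq+n)$ for each of the $T$ pricing iterations, and $r_{\mathrm{rep}}$ calls of repair and \texttt{LocalRefine} costed exactly as you do. The assumptions you make explicit are used silently by the paper: constant-time reachability and compatibility queries, $O(n+m_d)$ acyclicity tests, and clique sizes bounded by $q$ during refinement. Your caveat that non-increasing migrations in \texttt{LocalRefine} can grow a clique beyond $q$ (so $q$ must be enforced or read as the largest clique size encountered) is a valid point the paper's proof does not address.
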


\begin{proof}
Candidate generation scans \(R_{\mathrm{rg}}\) randomized vertex orders while
growing cliques of size at most \(q\), and checking, recording, pricing, and
ranking at most \(m_0\) columns gives
\(O(R_{\mathrm{rg}}nq+m_0(q^2+q+\log m_0))\). Each Lagrangian iteration
evaluates reduced costs and relaxed coverage counts over the retained pool,
which costs \(O(mq+n)\). A repair call ranks retained columns, performs at most
\(n\) insertions, and scans up to \(b\) candidates per insertion, giving
\(O(mq\log m+nb(q+n+m_d))\). Local refinement checks source--target clique
pairs and legal vertex moves for \(L\) passes, giving
\(O(Lc^2q(q+n+m_d))\). Multiplying the repair and refinement costs by
\(r_{\mathrm{rep}}\) yields the stated bound.
\end{proof}

The lemma shows the intended design trade-off: pricing over the bounded
retained pool is polynomial and relatively cheap, while repair and refinement
dominate because they repeatedly check contracted-graph feasibility.

The quality of LIH depends on the restricted candidate pool. If a useful clique
is never generated or is removed by screening, the Lagrangian selection step
cannot choose it. This is the central trade-off of the method: restricting
\(\mathcal{P}\) greatly reduces the search complexity, while random greedy
construction and the warm-up pricing screen aim to preserve the columns that
are most valuable for parallel execution. Empirically, retaining all singleton
columns and the top \(5\%\) of non-singleton columns provides a compact pool
that is sufficient for near-optimal solutions in the tested instances.

\section{Evaluation}
\label{sec:exp}

This section evaluates MPE on randomly generated series-parallel task graphs.
We compare solution quality and runtime under varying graph sizes,
compatibility densities, precedence ratios, and task-weight distributions
using three methods.

\begin{itemize}[leftmargin=*]
    \item \textbf{Weighted BB-MGC.} An exact branch-and-bound baseline adapted
    from the mixed graph coloring framework in~\cite{kouider2021bi}. Because
    mixed graph coloring is equivalent to our clique-partitioning formulation
    after complementing the compatibility edge set, this method provides the
    optimal weighted makespan when it finishes within the time limit.
    \item \textbf{Randomized Greedy (RG).} A randomized constructive baseline
    that repeatedly scans a randomized topological order and inserts each
    vertex into one feasible existing batch chosen from a small ranked
    candidate list.
    \item \textbf{LIH.} The proposed Lagrangian-based Iterative Heuristic in
    Section~\ref{sec:method7-lagrangian} and
    Algorithm~\ref{alg:m7-lagrangian}. LIH builds a preliminary candidate
    pool, retains all singleton columns and the top \(5\%\) of non-singleton
    columns after warm-up pricing, and then repairs the priced relaxed
    solution into a legal clique partition.
\end{itemize}

All methods are implemented in C++ and compiled with G++~11.2 using the
\texttt{-O3} option. The experiments are run on a computer equipped with an
AMD Ryzen 7 8745HS processor at 3.80\,GHz and 32\,GB RAM. Each run is
terminated if it does not finish within 1800 seconds.

\subsection{Baselines and Experimental Setup}
\label{subsec:exp-bb-mgc}

The weighted BB-MGC baseline uses the mixed graph coloring view. In MPE,
\(E\) denotes compatibility, whereas in mixed graph coloring undirected edges
denote conflicts. We therefore construct
\[
    G^{c}=(V,\overline{E},A),
\]
where \(\overline{E}=\binom{V}{2}\setminus E\). A color class in \(G^{c}\)
then corresponds to a compatible batch in the original MPE instance.

The branch-and-bound framework in~\cite{kouider2021bi} is unit-time; we keep
its exact coloring search but replace the unit color-class cost by
\(c(C)=\max_{v\in C}w_v\). The feasible coloring space is unchanged, so the
adapted baseline remains exact for the weighted makespan in
\eqref{eq:model-partition-cost}; pruning and incumbent comparisons are also
evaluated with the weighted batch cost.

\inlineheading{Randomized Greedy Baseline}

Algorithm~\ref{alg:exp-rg} gives the randomized greedy baseline. Each round
processes vertices in a randomized topological order and inserts the current
vertex into a feasible existing batch when possible. Candidate batches are
ranked by incremental partition cost, new batch cost, and batch size; RG then
samples uniformly from the best \(K\) candidates. The method remains purely
constructive but can escape some unlucky early insertions.

\begin{algorithm}[!t]
\caption{Randomized Greedy Clique Partitioning (RG).}
\label{alg:exp-rg}
\LinesNumbered
\textbf{Input:} SPG \(G=(V,E,A)\), vertex weights \(w\), round limit \(R\),
candidate width \(K\)\;
\textbf{Output:} A legal clique partition \(\mathcal{C}^{best}\)\;

\(\mathcal{C}^{best}\leftarrow\{\{v\}:v\in V\}\)\;
\For{\(r=1,\ldots,R\)}
{
    draw a randomized topological order \(\pi\) of \((V,A)\)\;
    \(\mathcal{C}\leftarrow\emptyset\)\;
    \For{each \(v\in\pi\)}
    {
        \(\mathcal{F}\leftarrow\) existing batches in \(\mathcal{C}\) that can
        accept \(v\) without violating feasibility or contracted-graph
        acyclicity\;
        \If{\(\mathcal{F}=\emptyset\)}
        {
            append the singleton batch \(\{v\}\) to \(\mathcal{C}\)\;
        }
        \Else
        {
            rank \(C\in\mathcal{F}\) by the increase in \(T(\mathcal{C})\),
            by \(c(C\cup\{v\})\), and then by \(|C|\)\;
            choose one batch uniformly from the first
            \(\min\{K,|\mathcal{F}|\}\) ranked candidates\;
            insert \(v\) into the chosen batch\;
        }
    }
    \If{\(\mathcal{C}\) is legal and
    \(T(\mathcal{C})<T(\mathcal{C}^{best})\)}
    {
        \(\mathcal{C}^{best}\leftarrow\mathcal{C}\)\;
    }
}
\textbf{return} \(\mathcal{C}^{best}\)\;
\end{algorithm}

RG and LIH use the same feasibility checks and objective. RG relies on local
insertions, whereas LIH uses Lagrangian prices to guide a repaired column
selection.

\inlineheading{Instance Generation}

Synthetic SPG instances are generated in the mixed-graph form \(G=(V,E,A)\).

\begin{itemize}[leftmargin=*]
    \item \textbf{Vertices and weights.} Each instance contains \(N_v\)
    vertices. By default, \(N_v=25\). Each vertex weight \(w_v\) is sampled
    from \([5,95]\). We test Uniform, Normal, and Bimodal distributions. The
    default setting is Normal with \(\mu=50\) and \(\sigma=15\). The Bimodal
    setting uses two modes with \(\mu_1=20\), \(\mu_2=80\), and \(\sigma=5\).
    \item \textbf{Compatibility and precedence.} We first sample
    \(N_a=\lfloor D N_v(N_v-1)/2\rfloor\) unordered vertex pairs, where \(D\)
    is the graph density and defaults to \(30\%\). A fraction \(R\) of these
    pairs is converted into directed precedence arcs. The default value is
    \(R=5\%\). Directed arcs are oriented according to a random topological
    order to avoid directed cycles. The remaining unordered pairs are inserted
    into \(E\) as compatibility edges. Instances that violate the required
    acyclicity or SPG structure are regenerated.
\end{itemize}

Table~\ref{tab:exp-defaults} lists the defaults; each experiment group varies
one parameter at a time.

\begin{table}[!t]
\caption{Default parameters for random instance generation.}
\label{tab:exp-defaults}
\centering
\begin{tabular*}{\columnwidth}{@{\extracolsep{\fill}}ll@{}}
\toprule
Parameter & Default value \\
\midrule
Number of vertices \(N_v\) & 25 \\
Graph density \(D\) & \(30\%\) \\
Directed-edge ratio \(R\) & \(5\%\) \\
Weight range & \([5,95]\) \\
Weight distribution & Normal, \(\mu=50,\sigma=15\) \\
Instances per setting & 200 \\
Time limit & 1800 seconds \\
\bottomrule
\end{tabular*}
\end{table}

\begin{table}[!t]
\caption{Experiment groups.}
\label{tab:exp-groups}
\centering
\begin{tabular*}{\columnwidth}{@{\extracolsep{\fill}}lll@{}}
\toprule
Group & Varied parameter & Candidate values \\
\midrule
G1 & Number of vertices \(N_v\) & \(5,10,15,20,25,30,35\) \\
G2 & Graph density \(D\) & \(5\%,10\%,\ldots,50\%\) \\
G3 & Directed-edge ratio \(R\) & \(0\%,5\%,10\%,15\%,20\%\) \\
G4 & Weight distribution & Uniform, Normal, Bimodal \\
\bottomrule
\end{tabular*}
\end{table}

\inlineheading{Evaluation Metrics}

For each instance, we record the makespan \(T(\mathcal{C})\) in
\eqref{eq:model-partition-cost}. When BB-MGC finishes within the time limit,
we report the relative gap of RG and LIH:
\begin{equation*}
    \mathrm{Gap}(M)
    =
    \frac{T_M(G)-T_{\mathrm{BB}}(G)}
         {T_{\mathrm{BB}}(G)} ,
    \quad M\in\{\mathrm{RG},\mathrm{LIH}\},
\end{equation*}
where \(T_{\mathrm{BB}}(G)\) is the optimal weighted BB-MGC makespan. We also
report optimal probability and the compression ratio
\begin{equation*}
    \mathrm{Comp.} =
    \frac{\sum_{v\in V}w_v - T(\mathcal{C})}
         {\sum_{v\in V}w_v}.
\end{equation*}
Runtime is measured as wall-clock time. Instances where BB-MGC times out are
excluded from gap and optimal-probability statistics.

\subsection{Overall Comparison and Node Scaling}
\label{subsec:exp-overall}

\begin{table}[!t]
\caption{Overall comparison across all comparable instances.}
\label{tab:exp-overall}
\centering
\setlength{\tabcolsep}{3pt}
\begin{tabular*}{\columnwidth}{@{\extracolsep{\fill}}lcccc@{}}
\toprule
Method & Opt. prob. & Avg. gap & Median gap & Avg. time \\
\midrule
RG & \(41.16\%\) & \(2.16\%\) & \(0.87\%\) & \(13.75\) ms \\
BB-MGC & \(100.00\%\) & \(0.00\%\) & \(0.00\%\) & \(22.83\) s \\
LIH & \(91.25\%\) & \(0.073\%\) & \(0.00\%\) & \(18.19\) ms \\
\bottomrule
\end{tabular*}
\end{table}

Table~\ref{tab:exp-overall} summarizes all instances with an exact reference.
BB-MGC certifies the optimum but requires \(22.83\) s on average. RG is much
faster at \(13.75\) ms, but matches the optimum in only \(41.16\%\) of the
instances and has an average gap of \(2.16\%\). LIH remains close to RG in
runtime (\(18.19\) ms) while reaching \(91.25\%\) optimal probability and an
average gap of only \(0.073\%\). When LIH is not optimal, its median positive
gap is \(0.56\%\) and its 95th percentile positive gap is \(2.32\%\).

\begin{figure*}[!t]
    \centering
    \includegraphics[width=0.86\textwidth]{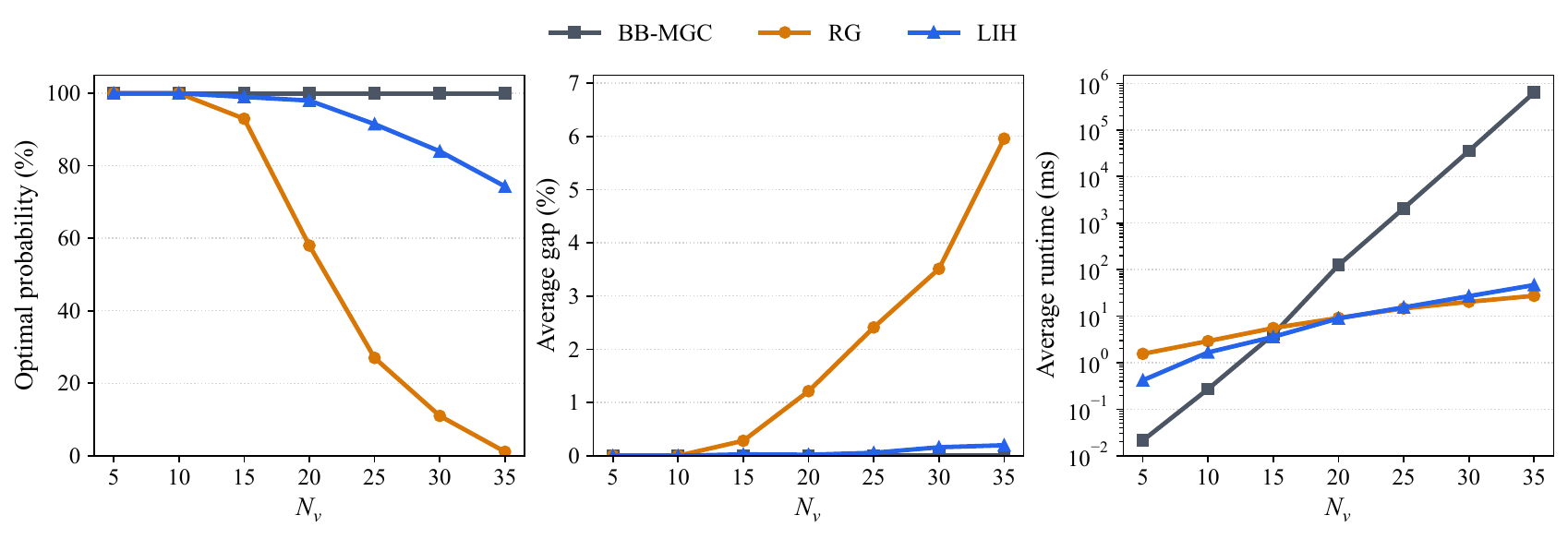}
    \caption{Evaluation results for different numbers of vertices with
configuration: \(D=30\%\), \(R=5\%\), and Normal weights
\((\mu=50,\sigma=15)\).}
    \label{fig:exp-nodes}
\end{figure*}

\inlineheading{Effect of the Number of Vertices}

\begin{table}[!t]
\caption{Solution quality as the number of vertices varies.}
\label{tab:exp-nodes}
\centering
\setlength{\tabcolsep}{2.5pt}
\begin{tabular*}{\columnwidth}{@{\extracolsep{\fill}}rrrrrr@{}}
\toprule
\(N_v\) & Comp. & RG opt. & LIH opt. & RG gap & LIH gap \\
\midrule
5  & 200 & \(100.0\%\) & \(100.0\%\) & \(0.00\%\) & \(0.00\%\) \\
10 & 200 & \(100.0\%\) & \(100.0\%\) & \(0.00\%\) & \(0.00\%\) \\
15 & 200 & \(93.0\%\)  & \(99.0\%\)  & \(0.28\%\) & \(0.026\%\) \\
20 & 200 & \(58.0\%\)  & \(98.0\%\)  & \(1.21\%\) & \(0.019\%\) \\
25 & 200 & \(27.0\%\)  & \(91.5\%\)  & \(2.41\%\) & \(0.056\%\) \\
30 & 200 & \(11.0\%\)  & \(84.0\%\)  & \(3.51\%\) & \(0.160\%\) \\
35 & 183 & \(1.1\%\)   & \(74.3\%\)  & \(5.96\%\) & \(0.198\%\) \\
\bottomrule
\end{tabular*}
\end{table}

Table~\ref{tab:exp-nodes} and Fig.~\ref{fig:exp-nodes} show the node-scaling
trend. The ``Comp.'' column gives the number of instances with a certified
BB-MGC reference; at \(N_v=35\), 17 of 200 BB-MGC runs time out. RG's optimal
probability falls from \(100.0\%\) at \(N_v=10\) to \(1.1\%\) at
\(N_v=35\), while LIH still reaches \(74.3\%\) with an average gap below
\(0.2\%\). BB-MGC runtime grows to \(638.21\) s at \(N_v=35\); RG and LIH
remain in the millisecond range, with averages of \(27.39\) ms and
\(46.65\) ms.

\begin{figure}[!t]
    \centering
    \includegraphics[width=0.86\linewidth]{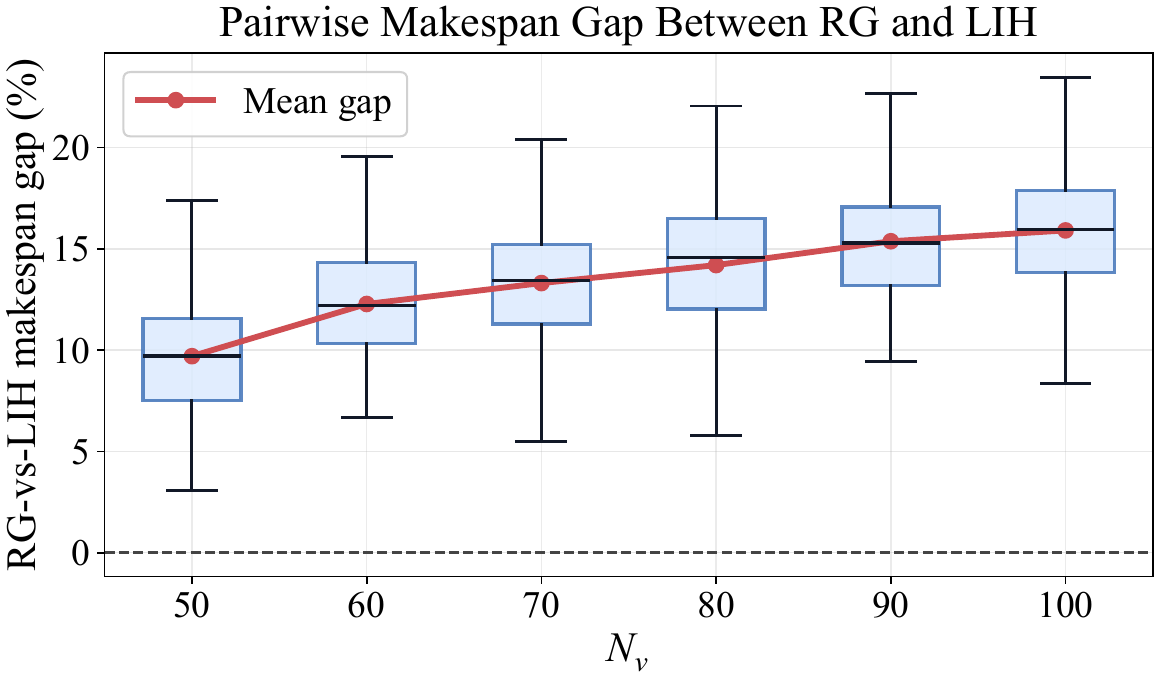}
    \caption{Pairwise makespan gap between RG and LIH on larger random
    instances. The gap is computed as
    \(100(T_{\mathrm{RG}}-T_{\mathrm{LIH}})/T_{\mathrm{LIH}}\), so positive
    values indicate that LIH obtains a smaller makespan.}
    \label{fig:exp-large-nodes-rg-gap}
\end{figure}

\inlineheading{Large-Node Comparison with Randomized Greedy}

For larger instances, where BB-MGC is impractical, we compare LIH directly
with RG for \(N_v=50,60,\ldots,100\), using 200 instances per setting under
the default \(D=30\%\), \(R=5\%\), and Normal-weight configuration.
Figure~\ref{fig:exp-large-nodes-rg-gap} shows that LIH obtains a smaller
makespan in all 1200 pairwise comparisons. The average RG-to-LIH gap increases
from \(9.70\%\) at \(N_v=50\) to \(15.91\%\) at \(N_v=100\), and LIH's
average compression exceeds RG by \(3.39\)--\(4.98\) percentage points. This
supports the exact-reference trend: Lagrangian pricing becomes more valuable
as the clique-combination space grows.

\begin{figure*}[!t]
    \centering
    \includegraphics[width=0.86\textwidth]{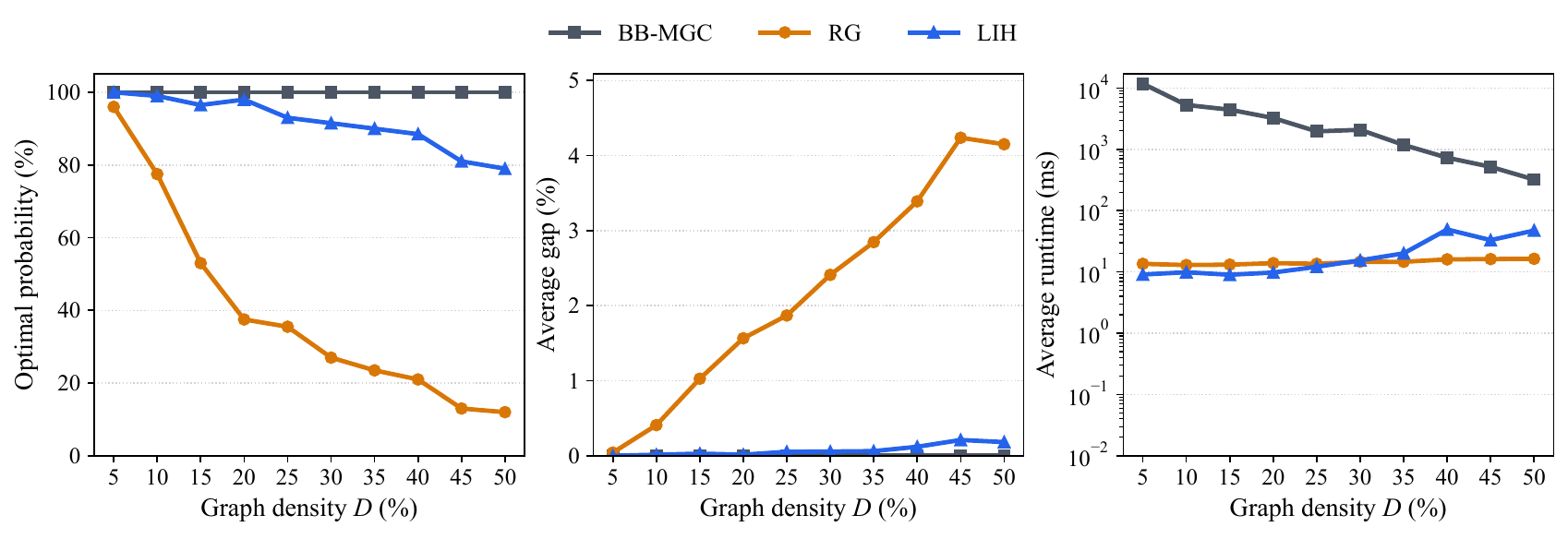}
    \caption{Evaluation results for different graph densities with
configuration: \(N_v=25\), \(R=5\%\), and Normal weights
\((\mu=50,\sigma=15)\).}
    \label{fig:exp-density}
\end{figure*}

\subsection{Sensitivity to Graph and Weight Parameters}
\label{subsec:exp-density}
\inlineheading{Effect of Graph Density}

Figure~\ref{fig:exp-density} reports the effect of compatibility density. As
\(D\) increases from \(5\%\) to \(50\%\), the exact baseline's average
compression rises from \(25.91\%\) to \(66.95\%\), but the grouping decision
also becomes harder. RG's optimal probability drops from \(96.0\%\) to
\(12.0\%\), and its average gap grows from \(0.04\%\) to \(4.15\%\). LIH is
less sensitive: its optimal probability stays between \(79.0\%\) and
\(100.0\%\), and its average gap never exceeds \(0.21\%\).

\begin{figure*}[!t]
    \centering
    \includegraphics[width=0.86\textwidth]{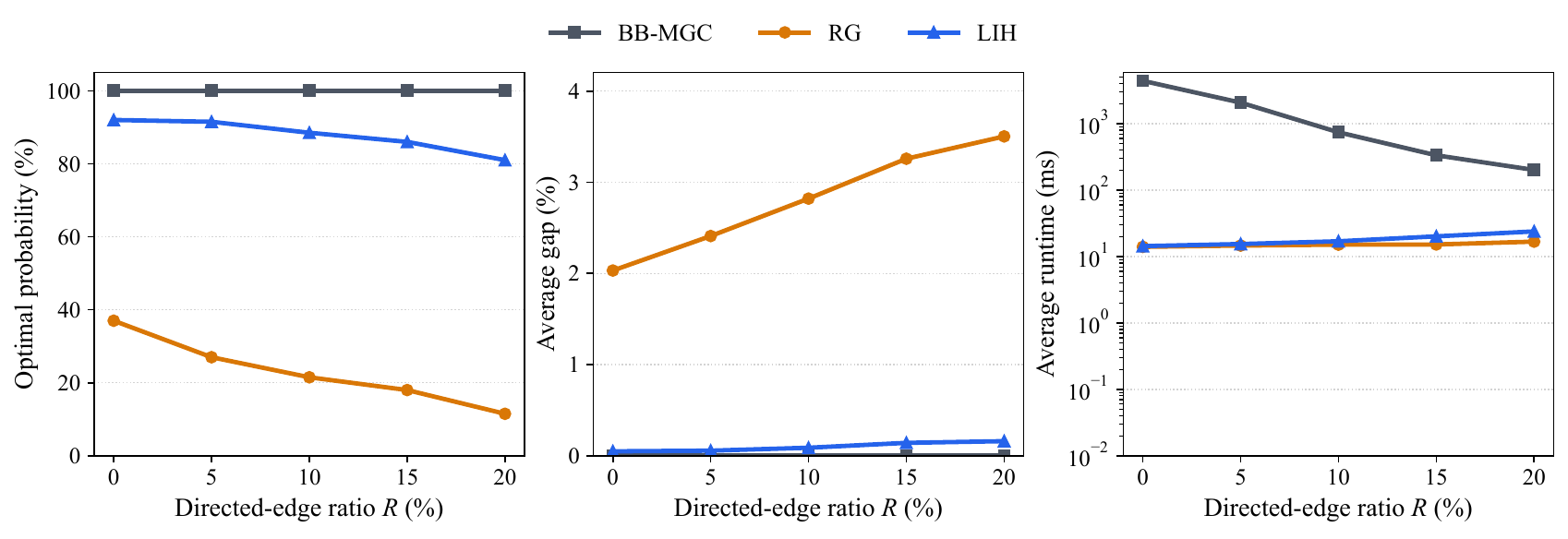}
    \caption{Evaluation results for different directed-edge ratios with
configuration: \(N_v=25\), \(D=30\%\), and Normal weights
\((\mu=50,\sigma=15)\).}
    \label{fig:exp-directed-ratio}
\end{figure*}

\inlineheading{Effect of Directed-Edge Ratio}

Figure~\ref{fig:exp-directed-ratio} isolates precedence constraints. As \(R\)
increases from \(0\%\) to \(20\%\), average exact compression decreases from
\(58.29\%\) to \(52.12\%\), while BB-MGC runtime falls from \(4.44\) s to
\(0.20\) s because ordering constraints reduce the coloring search space. RG's
optimal probability decreases from \(37.0\%\) to \(11.5\%\), with its average
gap rising from \(2.03\%\) to \(3.50\%\). LIH maintains \(81.0\%\)--\(92.0\%\)
optimal probability and keeps the average gap below \(0.17\%\).

\begin{figure*}[!t]
    \centering
    \includegraphics[width=0.86\textwidth]{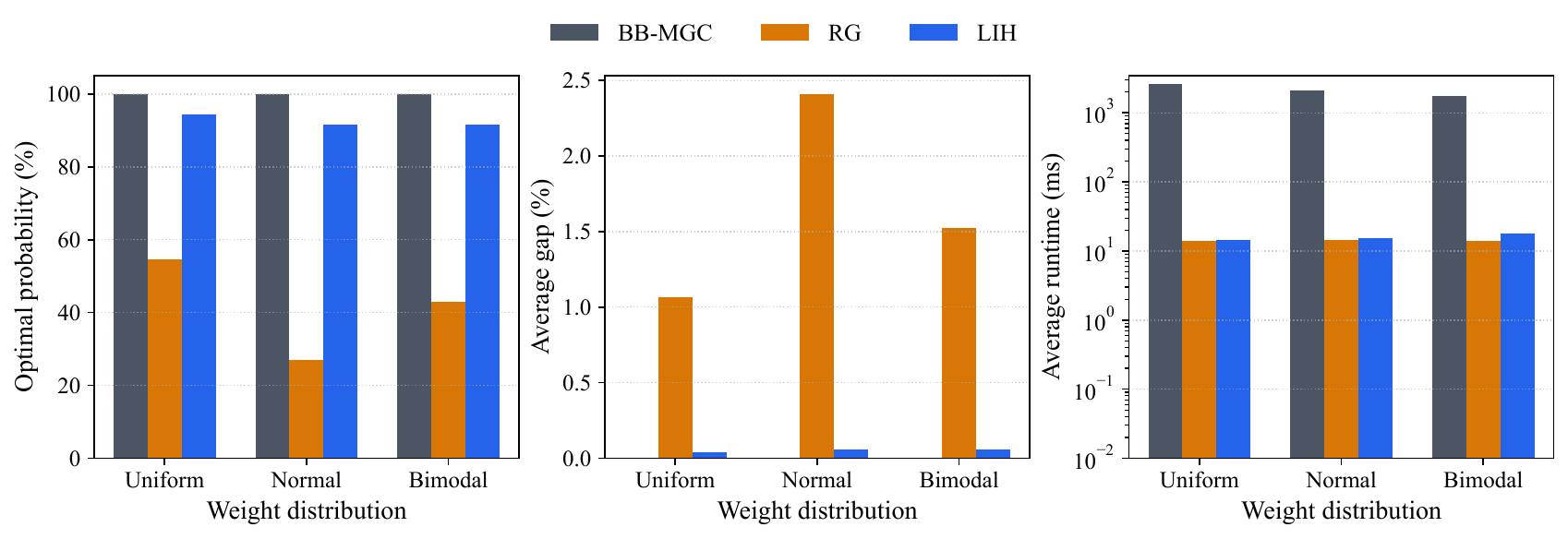}
    \caption{Evaluation results for different weight distributions with
configuration: \(N_v=25\), \(D=30\%\), and \(R=5\%\).}
    \label{fig:exp-weight-distribution}
\end{figure*}

\inlineheading{Effect of Weight Distribution}

Figure~\ref{fig:exp-weight-distribution} compares Uniform, Normal, and Bimodal
weights. RG stays fast but its quality varies: the average gap is
\(1.06\%\), \(2.41\%\), and \(1.53\%\), respectively. LIH is more stable, with
optimal probabilities of \(94.5\%\), \(91.5\%\), and \(91.5\%\), and an
average gap no larger than \(0.056\%\). Reduced-cost pricing therefore captures
weighted savings more reliably than local insertion alone.

\subsection{Discussion}
\label{subsec:exp-discussion}

Overall, BB-MGC is useful for certification but becomes expensive as the
coloring search space expands. RG is lightweight, yet its local insertion
decisions degrade with graph size, density, and precedence complexity. LIH
adds Lagrangian pricing before feasibility repair, giving high optimal
probability, very small gaps, and millisecond-level runtime in the
exact-reference settings. On 50--100-node instances, LIH also consistently
improves over RG in makespan and compression, supporting its use for static
MPE batching when solution quality matters.

\section{Case Study: From PLC Ladder Logic to MPE}
\label{sec:case-study}

This section reports an end-to-end case study that maps PLC ladder programs to
the proposed MPE model, generates staged Verilog, and simulates the resulting
HDL to measure control-cycle latency.

\subsection{PLC Ladder Programs and Dependencies}
\label{subsec:case-plc-semantics}

Programmable Logic Controllers (PLCs) are widely used in industrial automation
and embedded control, where deterministic scan behavior is essential. In each
scan, the controller samples inputs, evaluates the ladder program, and updates
outputs. Ladder Diagram (LD) represents this logic as rungs composed of
contacts and coils: normally open contacts read a signal directly, normally
closed contacts read its complement, and coils update memory variables or
physical outputs.

During rung-by-rung execution, later rungs may observe values produced by
earlier rungs in the same scan. For rung \(R_i\), let
\(\mathrm{Read}(R_i)\) and \(\mathrm{Write}(R_i)\) denote the variables it
reads and writes. If
\(\mathrm{Write}(R_i)\cap \mathrm{Read}(R_j)\neq\emptyset\), then \(R_j\)
depends on \(R_i\). Multiple writes to the same coil also preserve scan order.
Interlocking-signal conflicts, such as complementary NO/NC reads of the same
input or mutually exclusive control actions, are treated as compatibility
constraints even when they do not form direct data dependencies.

Thus, ladder programs contain both local serial chains and independent branches.
MPE exposes the legal within-scan parallelism without changing the observable
PLC scan semantics.

\subsection{SPG Modeling of PLC Ladder Logic}
\label{subsec:case-spg-modeling}

We model a ladder program by constructing the mixed graph \(G=(V,E,A)\) and the
weight vector \(w\) used in Section~\ref{sec:model}. The construction has four
steps.

\begin{enumerate}[leftmargin=*]
    \item \textbf{Task extraction.} Each rung, or each basic logic block inside
    a large rung, becomes a task vertex \(v\in V\). The vertex weight \(w_v\)
    is the estimated execution delay of that rung or logic block.
    \item \textbf{Precedence extraction.} If a value written by \(R_i\) is read
    by \(R_j\), or if two writes must preserve scan order, we add a directed arc
    \((R_i,R_j)\in A\). The reachability relation in \((V,A)\) is later used
    to forbid batching two rungs that are connected by a direct or transitive
    dependence.
    \item \textbf{Compatibility extraction.} Two vertices are connected by an
    undirected edge in \(E\) only when they have no precedence-path conflict,
    no interlocking-signal conflict, and no platform-specific conflict such as
    exclusive I/O access or a shared hardware resource that cannot be
    duplicated.
    \item \textbf{Batch optimization.} The resulting graph is solved as an MPE
    instance. A selected clique represents one legal parallel execution batch,
    and the contracted graph gives the order in which batches are issued.
\end{enumerate}

The implemented front end supports compact ladder descriptions and
PLCOpen-style XML programs~\cite{iacobelli2024detection}. For XML input, each
\texttt{program} POU is converted into one SPG: coils or \texttt{outVariable}
elements become vertices, upstream contacts provide read sets, scan-order
write/read relations become directed arcs, and function blocks become weighted
computation dependencies or HDL module instances.

Table~\ref{tab:case-plc-rungs} gives a representative fragment. \(R_1\) writes
\(C_1\), which is read by \(R_2\) and \(R_3\), so arcs \((R_1,R_2)\) and
\((R_1,R_3)\) are added. \(R_4\) and \(R_5\) read \(I_2\) through
complementary contacts, so their compatibility edge is omitted. The mapping is
shown in Fig.~\ref{fig:case-ladder-spg}.

\begin{table}[!t]
\caption{Example mapping from ladder rungs to MPE constraints.}
\label{tab:case-plc-rungs}
\centering
\setlength{\tabcolsep}{3pt}
\renewcommand{\arraystretch}{1.08}
\begin{tabular*}{\columnwidth}{@{\extracolsep{\fill}}>{\raggedright\arraybackslash}p{0.11\columnwidth}
                >{\raggedright\arraybackslash}p{0.22\columnwidth}
                >{\raggedright\arraybackslash}p{0.16\columnwidth}
                >{\raggedright\arraybackslash}p{0.38\columnwidth}@{}}
\toprule
Rung & Read variables & Write variables & Graph constraint \\
\midrule
\(R_1\) & \(I_1\) (NO) & \(C_1\) & Source rung \\
\(R_2\) & \(C_1,I_4\) (NO) & \(M_1\) & Arc \((R_1,R_2)\) \\
\(R_3\) & \(C_1,I_3\) (NO) & \(M_2\) & Arc \((R_1,R_3)\) \\
\(R_4\) & \(I_2\) (NC) & \(C_2\) & Incompatible with \(R_5\) \\
\(R_5\) & \(I_2\) (NO) & \(M_3\) & Incompatible with \(R_4\) \\
\bottomrule
\end{tabular*}
\end{table}

Assume the rung weights are
\[
    (w_{R_1},w_{R_2},w_{R_3},w_{R_4},w_{R_5})=(4,6,5,3,4).
\]
Serial execution has makespan \(4+6+5+3+4=22\). Applying LIH to this graph
returns the following legal staged batching plan:
\[
    \mathcal{C}=\bigl\{\{R_1,R_4\},\{R_2,R_3,R_5\}\bigr\}.
\]
The two stages preserve the dependencies from \(R_1\) to \(R_2\) and \(R_3\)
and separate the interlocking pair \(R_4,R_5\). Under the staged batching
semantics in Section~\ref{sec:model}, the makespan is
\[
    T(\mathcal{C})
    =
    \max\{4,3\}+\max\{6,5,4\}
    =
    10.
\]
The example illustrates the weighted clique-partitioning objective: each batch
is a feasible clique, and total time sums the slowest task in each batch.

\subsection{FPGA-Oriented Parallel Realization}
\label{subsec:case-fpga}

After LIH produces a legal batching plan, each batch is emitted as one FPGA
execution stage. Contacts become signal reads or negations, LD operations
become combinational logic or function-block instances, and coils are mapped to
intermediate or output registers according to scan semantics.

The generated Verilog module uses a clocked \texttt{start}/\texttt{done}
interface. At the start of a scan, physical inputs are sampled into input-image
registers; a finite-state controller then issues the MPE stages in a
topological order of the contracted graph. All rungs in the current clique are
evaluated in parallel, and their results are registered before the next stage.
This preserves the PLC input/output boundary while extracting parallelism
inside the execution phase.

\subsection{Benchmark and Simulation Results}
\label{subsec:case-benchmark}

The evaluation uses six compact ladder kernels and 30 legitimate PLCOpen XML
programs from the PLC-LD benchmark collection~\cite{iacobelli2024detection}\footnote{\url{https://github.com/UniboSecurityResearch/PLC-LD-dataset}}.
The compact kernels cover representative dependency patterns; the XML programs
test the full parsing, scheduling, HDL generation, and simulation pipeline.

\begin{table*}[!t]
\caption{Per-program comparison of original PLC scan and parallelized HDL.
C-spd. and W-spd. denote cycle speedup and weighted-makespan speedup,
respectively. XML HDL cycles are measured in ModelSim, and compact-kernel HDL
cycles are derived from the same generated stage controller.}
\label{tab:case-all-results}
\centering
\setlength{\tabcolsep}{2pt}
\renewcommand{\arraystretch}{0.90}
\begin{tabular*}{\textwidth}{@{\extracolsep{\fill}}lcccccccccc@{}}
\toprule
Benchmark & \multicolumn{3}{c}{Graph}
& \multicolumn{3}{c}{Cycles}
& \multicolumn{3}{c}{Makespan} \\
\cmidrule(lr){2-4}\cmidrule(lr){5-7}\cmidrule(l){8-10}
& \(|V|\) & \(|A|\) & \(|E|\)
& PLC & HDL & C-spd.
& \(T_s\) & \(T_p\) & W-spd. \\
\midrule
alarm/fault & 10 & 10 & 11 & 10 & 6 & 1.67 & 31 & 18 & 1.72 \\
interlock & 5 & 2 & 7 & 5 & 3 & 1.67 & 22 & 10 & 2.20 \\
motor\_chain & 6 & 4 & 9 & 6 & 4 & 1.50 & 20 & 12 & 1.67 \\
sensor\_vote & 7 & 6 & 4 & 7 & 5 & 1.40 & 24 & 16 & 1.50 \\
pid\_bank & 7 & 6 & 12 & 7 & 4 & 1.75 & 38 & 16 & 2.38 \\
sequence & 8 & 7 & 9 & 8 & 6 & 1.33 & 24 & 18 & 1.33 \\
\midrule
lassignment & 5 & 2 & 8 & 5 & 3 & 1.67 & 46 & 21 & 2.19 \\
lassignment1 & 3 & 0 & 3 & 3 & 2 & 1.50 & 32 & 14 & 2.29 \\
lexit & 7 & 6 & 15 & 7 & 4 & 1.75 & 62 & 29 & 2.14 \\
lstart\_cycle & 5 & 2 & 8 & 5 & 3 & 1.67 & 58 & 27 & 2.15 \\
lstart\_cycle1 & 3 & 0 & 3 & 3 & 2 & 1.50 & 44 & 20 & 2.20 \\
lstart\_eq & 5 & 2 & 8 & 5 & 3 & 1.67 & 49 & 21 & 2.33 \\
lstart\_le & 5 & 2 & 8 & 5 & 3 & 1.67 & 49 & 21 & 2.33 \\
lstart\_le1 & 3 & 0 & 3 & 3 & 2 & 1.50 & 35 & 14 & 2.50 \\
lstart\_lt & 3 & 0 & 3 & 3 & 2 & 1.50 & 35 & 14 & 2.50 \\
lstart\_lt1 & 3 & 0 & 3 & 3 & 2 & 1.50 & 25 & 9 & 2.78 \\
lstop\_eq & 5 & 2 & 8 & 5 & 3 & 1.67 & 49 & 21 & 2.33 \\
lstop\_eq1 & 3 & 0 & 3 & 3 & 2 & 1.50 & 35 & 14 & 2.50 \\
lstop\_ge & 5 & 2 & 8 & 5 & 3 & 1.67 & 49 & 21 & 2.33 \\
lstop\_ge1 & 3 & 0 & 3 & 3 & 2 & 1.50 & 35 & 14 & 2.50 \\
lstop\_gt & 5 & 2 & 8 & 5 & 3 & 1.67 & 49 & 21 & 2.33 \\
lstop\_gt1 & 3 & 0 & 3 & 3 & 2 & 1.50 & 35 & 14 & 2.50 \\
lsub\_function & 6 & 2 & 13 & 6 & 3 & 2.00 & 51 & 21 & 2.43 \\
lsub\_function1 & 6 & 2 & 13 & 6 & 3 & 2.00 & 51 & 21 & 2.43 \\
lsub\_function2 & 6 & 2 & 13 & 6 & 3 & 2.00 & 51 & 21 & 2.43 \\
lsub\_function3 & 4 & 0 & 6 & 4 & 2 & 2.00 & 37 & 14 & 2.64 \\
lsubstitution\_coil & 5 & 2 & 8 & 5 & 3 & 1.67 & 52 & 21 & 2.48 \\
lsubstitution\_coil1 & 3 & 0 & 3 & 3 & 2 & 1.50 & 38 & 14 & 2.71 \\
lsubstitution\_start & 5 & 2 & 8 & 5 & 3 & 1.67 & 51 & 21 & 2.43 \\
lsubstitution\_start1 & 3 & 0 & 3 & 3 & 2 & 1.50 & 37 & 14 & 2.64 \\
lsubstitution\_stop & 5 & 2 & 8 & 5 & 3 & 1.67 & 51 & 21 & 2.43 \\
lsubstitution\_stop1 & 3 & 0 & 3 & 3 & 2 & 1.50 & 37 & 14 & 2.64 \\
lvalue\_filtering & 5 & 2 & 8 & 5 & 3 & 1.67 & 62 & 29 & 2.14 \\
lvalue\_filtering1 & 3 & 0 & 3 & 3 & 2 & 1.50 & 48 & 22 & 2.18 \\
lvalves\_handler & 5 & 2 & 8 & 5 & 3 & 1.67 & 46 & 21 & 2.19 \\
lvalves\_handler1 & 3 & 0 & 3 & 3 & 2 & 1.50 & 32 & 14 & 2.29 \\
\bottomrule
\end{tabular*}
\end{table*}
\begin{figure}[!t]
\centering
\subfloat[Ladder-program fragment.\label{fig:case-ladder}]{%
    \includegraphics[width=0.94\linewidth]{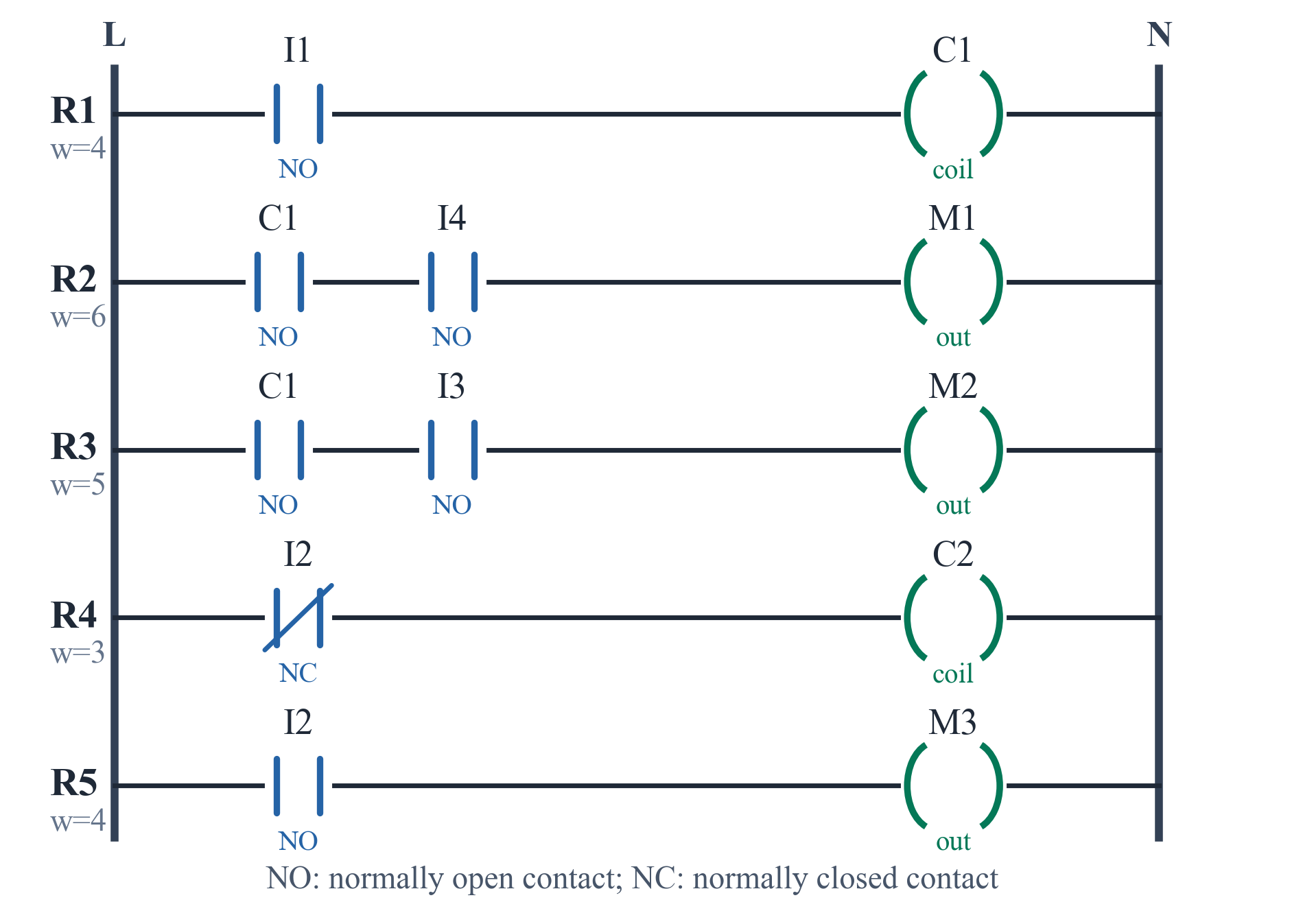}}
\par\medskip
\subfloat[Corresponding SPG.\label{fig:case-spg}]{%
    \includegraphics[width=0.94\linewidth]{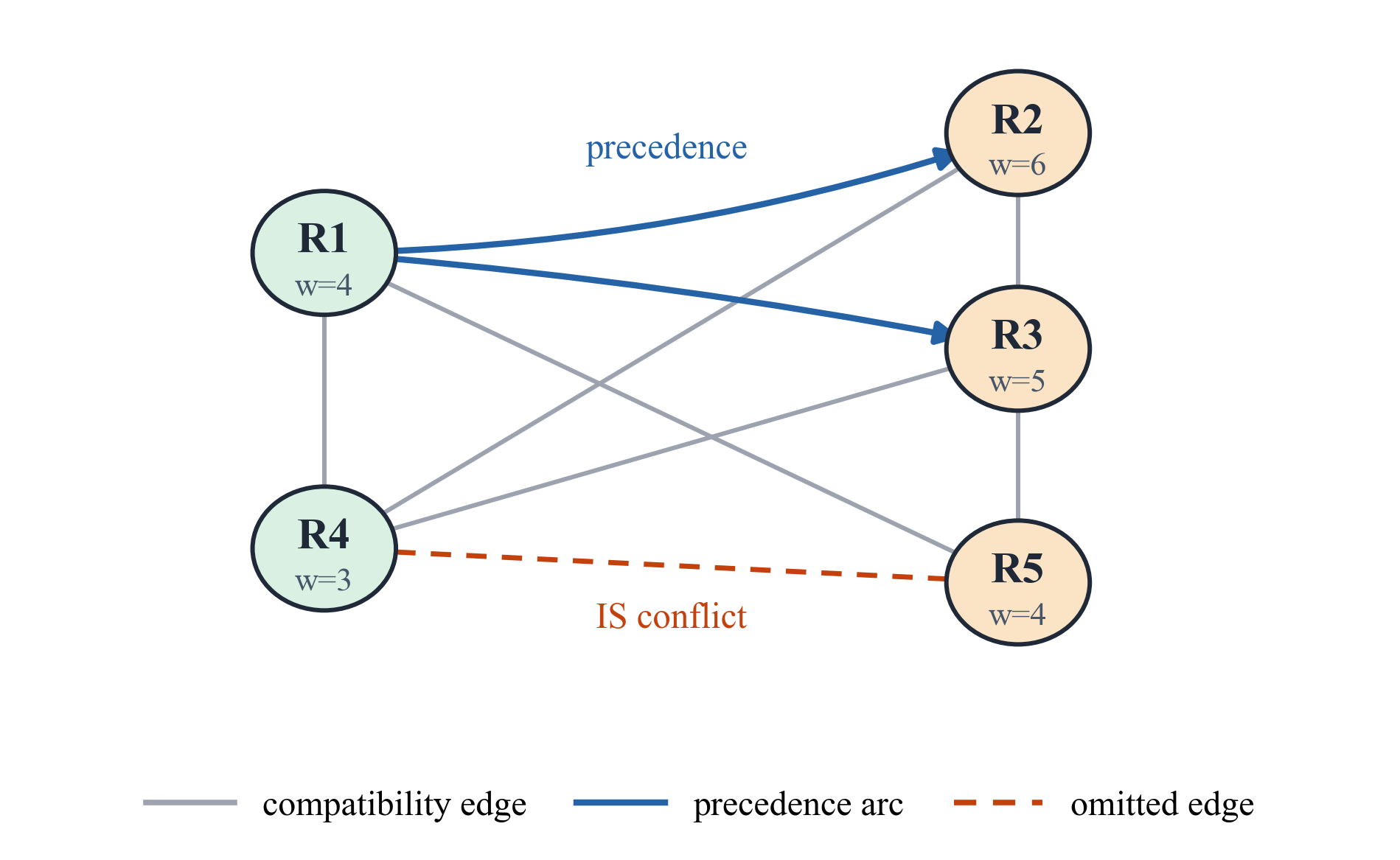}}
\caption{Case-study mapping from a ladder fragment to the corresponding SPG.
In Fig.~\ref{fig:case-spg}, gray edges are compatibility edges, blue arrows
are precedence arcs, and the dashed red segment marks the omitted compatibility
edge caused by an interlocking-signal conflict.}
\label{fig:case-ladder-spg}
\end{figure}

Table~\ref{tab:case-all-results} gives the per-program comparison. Across the
six compact ladder kernels, MPE reduces weighted makespan by \(42.12\%\) on
average; PID-bank and interlock gain most from compatible branches, while
sequence control is limited by longer scan-order chains.

For the PLCOpen XML benchmarks, the pipeline generated one SPG, one LIH
schedule, and one Verilog module per program. All 30 generated modules compiled
with the common HDL library and completed ModelSim simulation. The measured
controller latency falls from \(4.27\) PLC scan cycles to \(2.57\) HDL cycles
on average, giving a \(1.642\times\) cycle speedup and a \(38.54\%\) cycle
reduction. In the weighted MPE metric, the same benchmarks achieve
\(2.399\times\) average speedup, or a \(58.10\%\) staged-makespan reduction.
Thus, the selected cliques are not only scheduling abstractions; they translate
into executable HDL stages that preserve PLC scan semantics.

\section{Related Work}
\label{sec:related}

This section follows the same progression as the paper. We first review
task-graph and series-parallel scheduling models related to the MPE formulation,
then discuss mixed graph coloring and clique partitioning, followed by column
models and Lagrangian relaxation. We finally relate the resulting batching
model to FPGA-based embedded control.

\subsection{Task-Graph and Series-Parallel Scheduling}

Precedence-constrained task-graph scheduling is NP-hard in
general~\cite{ref_dag_nphard}. Classical heterogeneous DAG schedulers such as
HEFT~\cite{topcuoglu2002performance} and PEFT~\cite{arabnejad2013list} map and order
tasks on a fixed processor set while respecting precedence and communication
costs. Recent work has extended this line to FPGA-based heterogeneous real-time
systems~\cite{ahmadi2025messi}, preloaded shared-bus heterogeneous
platforms~\cite{chaudhary2025scheduling}, typed DAG tasks on heterogeneous
multicores~\cite{wu2025partitioned}, and energy-efficient scheduling on
heterogeneous embedded platforms~\cite{hu2023online}. These studies emphasize
resource assignment and schedule construction on specified heterogeneous
resources, whereas MPE abstracts a sufficiently provisioned FPGA stage and asks
which compatible tasks can be batched without violating precedence-induced
acyclicity.

Series-parallel graphs provide additional structure. Their recognition and
decomposition are well studied~\cite{ref_sp_graph1}, and scheduling methods
have exploited this structure for processor-constrained approximation and
static scheduling~\cite{ref_sp_graph2,keller2011peelsched}. In contrast, MPE
forms parallel batches under abundant hardware parallelism, with feasibility
defined by both precedence and compatibility plus acyclicity after contraction.

\subsection{Mixed Graph Coloring and Clique Partitioning}

The MPE formulation is closely related to mixed graph coloring and clique
partitioning. A mixed graph contains both undirected edges and directed arcs,
and mixed graph coloring uses undirected edges to encode conflicts and directed
arcs to encode ordering constraints~\cite{hansen1997mixed}. This viewpoint is
widely used in scheduling models based on disjunctive or mixed graphs. Kouider
and Ait Haddadene~\cite{kouider2021bi} proposed a bi-objective
branch-and-bound algorithm for unit-time job shop scheduling through mixed
graph coloring. We use a weighted adaptation of this exact mixed-coloring
framework as the BB-MGC baseline in the experiments.

Our formulation is related but not identical. In MPE, undirected edges denote
compatibility, so the mixed-coloring instance is obtained by complementing the
edge set. The objective is also weighted: a batch costs the maximum execution
time of its tasks, matching FPGA-style concurrent execution. This gives an
exact BB-MGC baseline for small instances, but its branch-and-bound search
becomes expensive as feasible color classes grow.

\subsection{Column Models and Lagrangian Relaxation}

The exact MPE formulation can be viewed as a set-partitioning model over all
feasible cliques. Such column-based formulations are common in large-scale
combinatorial optimization, but the number of columns can be exponential.
Branch-and-price methods address this difficulty by solving a restricted master
problem and generating useful columns through reduced costs~\cite{barnhart1998branch}.
Lagrangian relaxation provides another classical way to handle hard coupling
constraints by moving them into the objective with
multipliers~\cite{geoffrion1974lagrangean,fisher1981lagrangian}.

LIH follows this philosophy but targets MPE. It constructs a pricing-filtered
pool of SPG-aware cliques, uses Lagrangian multipliers as vertex prices, and
repairs the relaxed selection into an acyclic clique partition. This retains a
global pricing signal without the cost of full branch-and-price.

\subsection{FPGA-Based Embedded Control}

FPGAs are widely used in industrial and safety-critical control because they
provide deterministic timing, fine-grained parallelism, and isolation.
Prior work surveys FPGA design for industrial control~\cite{monmasson2007fpga}
and fault-tolerant SRAM-based FPGA systems~\cite{bernardeschi2015sram}.
Recent studies also accelerate PLC and embedded-control workloads, such as
multi-PID FPGA implementations for reducing PLC scan time~\cite{dhanabalan2022scan},
deterministic parallelization of legacy control code~\cite{hennig2016towards},
and shared-resource effects on real-time microcontrollers~\cite{oliveira2024shared}.

These works motivate hardware acceleration, while our focus is the graph-level
batching problem that decides which compatible tasks can share an FPGA
execution stage. Work on heterogeneous real-time FPGA systems, such as
MESSI~\cite{ahmadi2025messi}, and accelerator-scheduling surveys~\cite{zou2025realtime_survey}
mainly emphasize resource allocation and deadline analysis. We instead study a
sufficiently provisioned setting where dependency and compatibility determine
the available parallelism.

\section{Conclusion}
\label{sec:conclusion}

This paper studied Maximum Parallel Execution (MPE) for series-parallel task
graphs motivated by FPGA acceleration of embedded control programs. We modeled
MPE as a weighted clique-partitioning problem: compatible tasks share a
parallel batch, and selected batches are issued in a topological order of the
contracted precedence graph. The proposed LIH algorithm combines SPG-aware
candidate generation, warm-up pricing screening, Lagrangian pricing,
feasibility repair, and local refinement.

The experiments show that LIH keeps the solution quality close to the exact
BB-MGC reference while remaining in the millisecond range. Across comparable
instances, LIH matches the optimum in \(91.25\%\) of cases with an average gap
of \(0.073\%\), whereas the exact baseline becomes costly as the coloring
search space grows. On 50--100-node instances, LIH consistently improves over
the randomized greedy baseline, indicating that the pricing signal becomes
more useful as the grouping space expands.

The PLC ladder case study further connects the graph model to HDL generation
and cycle-level simulation, showing that selected cliques can be realized as
deterministic FPGA stages. Future work will incorporate explicit hardware
resources, communication overheads, and more detailed FPGA placement or
safety-partitioning requirements.

\bibliographystyle{IEEEtran}
\bibliography{ref}

@inproceedings{valdes1979recognition,
  title={The recognition of series parallel digraphs},
  author={Valdes, Jacobo and Tarjan, Robert E and Lawler, Eugene L},
  booktitle={Proceedings of the eleventh annual ACM symposium on Theory of computing},
  pages={1--12},
  year={1979}
}

@article{nilsson1998real,
  title={Real-time control systems with delays},
  author={Nilsson, Johan},
  journal={PhD Thesis TFRT-1049},
  year={1998},
  publisher={Department of Automatic Control, Lund Institute of Technology (LTH)}
}

@article{mittal2020survey,
  title={A survey of FPGA-based accelerators for convolutional neural networks},
  author={Mittal, Sparsh},
  journal={Neural computing and applications},
  volume={32},
  number={4},
  pages={1109--1139},
  year={2020},
  publisher={Springer}
}

@inproceedings{nurvitadhi2017can,
  title={Can FPGAs beat GPUs in accelerating next-generation deep neural networks?},
  author={Nurvitadhi, Eriko and Venkatesh, Ganesh and Sim, Jaewoong and Marr, Debbie and Huang, Randy and Ong Gee Hock, Jason and Liew, Yeong Tat and Srivatsan, Krishnan and Moss, Duncan and Subhaschandra, Suchit and others},
  booktitle={Proceedings of the 2017 ACM/SIGDA international symposium on field-programmable gate arrays},
  pages={5--14},
  year={2017}
}

@article{monmasson2007fpga,
  title={FPGA design methodology for industrial control systems—A review},
  author={Monmasson, Eric and Cirstea, Marcian N},
  journal={IEEE transactions on industrial electronics},
  volume={54},
  number={4},
  pages={1824--1842},
  year={2007},
  publisher={IEEE}
}

@article{bernardeschi2015sram,
  title={SRAM-based FPGA systems for safety-critical applications: A survey on design standards and proposed methodologies},
  author={Bernardeschi, Cinzia and Cassano, Luca and Domenici, Andrea},
  journal={Journal of Computer Science and Technology},
  volume={30},
  number={2},
  pages={373--390},
  year={2015},
  publisher={Springer}
}

@article{dhanabalan2022scan,
  title={Scan time reduction of plcs by dedicated parallel-execution multiple pid controllers using an fpga},
  author={Dhanabalan, Gnanasekaran and Tamil Selvi, Sankar and Mahdal, Miroslav},
  journal={Sensors},
  volume={22},
  number={12},
  pages={4584},
  year={2022},
  publisher={MDPI}
}

@inproceedings{henriksson2005optimal,
  title={Optimal on-line sampling period assignment for real-time control tasks based on plant state information},
  author={Henriksson, Dan and Cervin, Anton},
  booktitle={Proceedings of the 44th IEEE Conference on Decision and Control},
  pages={4469--4474},
  year={2005},
  organization={IEEE}
}

@inproceedings{hennig2016towards,
  title={Towards parallelizing legacy embedded control software using the LET programming paradigm},
  author={Hennig, Julien and von Hasseln, Hermann and Mohammad, Hassan and Resmerita, Stefan and Lukesch, Stefan and Naderlinger, Andreas},
  booktitle={2016 IEEE Real-Time and Embedded Technology and Applications Symposium (RTAS)},
  pages={1--1},
  year={2016},
  organization={IEEE Computer Soc.}
}

@article{ahmadi2025messi,
  title={MESSI: Task mapping and scheduling strategy for FPGA-based heterogeneous real-time systems},
  author={Ahmadi-Pour, Sallar and Saha, Sangeet and McDonald-Maier, Klaus and Drechsler, Rolf},
  journal={ACM Transactions on Design Automation of Electronic Systems},
  volume={30},
  number={3},
  pages={1--29},
  year={2025},
  publisher={ACM New York, NY}
}

@article{chaudhary2025scheduling,
  title={Scheduling Task Graph Applications on Preloaded Shared-Bus based Heterogeneous Platforms},
  author={Chaudhary, Chhavi and Devaraj, Rajesh and Sarkar, Arnab},
  journal={ACM Transactions on Design Automation of Electronic Systems},
  volume={31},
  number={2},
  pages={1--29},
  year={2025},
  publisher={ACM New York, NY}
}

@article{wu2025partitioned,
  title={Partitioned Scheduling and Analysis for a Typed DAG Task on Heterogeneous Multi-Cores},
  author={Wu, Yulong and Ma, Yehan and Xie, Mingdong and Zhang, Weizhe},
  journal={ACM Transactions on Architecture and Code Optimization},
  volume={22},
  number={3},
  pages={1--24},
  year={2025},
  publisher={ACM New York, NY}
}

@article{hu2023online,
  title={Online energy-efficient scheduling of DAG tasks on heterogeneous embedded platforms},
  author={Hu, Biao and Yang, Xincheng and Zhao, Mingguo},
  journal={Journal of Systems Architecture},
  volume={140},
  pages={102894},
  year={2023},
  publisher={Elsevier}
}

@article{keller2011peelsched,
  title={PEELSCHED: A simple and parallel scheduling algorithm for static taskgraphs},
  author={Keller, J{\"o}rg and Gerhards, Rainer},
  journal={PARS: parallel-algorithmen,-rechnerstrukturen und-systemsoftware},
  volume={28},
  number={1},
  pages={100--109},
  year={2011},
  publisher={Springer}
}

@inproceedings{oliveira2024shared,
  title={Shared resource contention in MCUs: A reality check and the quest for timeliness},
  author={Oliveira, Daniel and Chen, Weifan and Pinto, Sandro and Mancuso, Renato},
  booktitle={36th Euromicro Conference on Real-Time Systems (ECRTS 2024)},
  pages={5--1},
  year={2024},
  organization={Schloss Dagstuhl--Leibniz-Zentrum f{\"u}r Informatik}
}

@article{geoffrion1974lagrangean,
  title={Lagrangean relaxation for integer programming},
  author={Geoffrion, Arthur M},
  journal={Mathematical Programming Study},
  volume={2},
  pages={82--114},
  year={1974}
}

@article{fisher1981lagrangian,
  title={The Lagrangian relaxation method for solving integer programming problems},
  author={Fisher, Marshall L},
  journal={Management science},
  volume={27},
  number={1},
  pages={1--18},
  year={1981},
  publisher={INFORMS}
}

@article{ref_sp_graph1,
  title={The Recognition of Series Parallel Digraphs},
  author={Valdes, Jacobo and Tarjan, Robert E. and Lawler, Eugene L.},
  journal={SIAM Journal on Computing},
  volume={11},
  number={2},
  pages={298--313},
  year={1982},
  publisher={SIAM}
}

@inproceedings{ref_sp_graph2,
  title={Approximation Algorithms for Scheduling on Series-Parallel Graphs},
  author={Elbassioni, Khaled and Regnault, Damien and Srivastav, Anand},
  booktitle={Proceedings of the International Symposium on Algorithms and Computation (ISAAC)},
  pages={97--106},
  year={2009}
}

@book{ref_dag_nphard,
  title={Computers and Intractability: A Guide to the Theory of NP-Completeness},
  author={Garey, Michael R. and Johnson, David S.},
  publisher={W. H. Freeman},
  year={1979}
}

@article{ref_messi,
  title={MESSI: Task Mapping and Scheduling Strategy for FPGA-based Heterogeneous Real-Time Systems},
  author={Fryer, Michael and others},
  journal={IEEE Access},
  volume={11},
  pages={128632--128649},
  year={2023},
  publisher={IEEE}
}

@article{kouider2021bi,
  title={A bi-objective branch-and-bound algorithm for the unit-time job shop scheduling: A mixed graph coloring approach},
  author={Kouider, Ahmed and Ait Haddad{\`e}ne, Hac{\`e}ne},
  journal={Computers \& operations research},
  volume={132},
  pages={105319},
  year={2021},
  publisher={Elsevier}
}

@article{topcuoglu2002performance,
  title={Performance-effective and low-complexity task scheduling for heterogeneous computing},
  author={Topcuoglu, Haluk and Hariri, Salim and Wu, Min-You},
  journal={IEEE transactions on parallel and distributed systems},
  volume={13},
  number={3},
  pages={260--274},
  year={2002},
  publisher={IEEE}
}

@article{arabnejad2013list,
  title={List scheduling algorithm for heterogeneous systems by an optimistic cost table},
  author={Arabnejad, Hamid and Barbosa, Jorge G},
  journal={IEEE transactions on parallel and distributed systems},
  volume={25},
  number={3},
  pages={682--694},
  year={2013},
  publisher={IEEE}
}

@article{hansen1997mixed,
  title={Mixed graph colorings},
  author={Hansen, Pierre and Kuplinsky, Julio and de Werra, Dominique},
  journal={Mathematical Methods of Operations Research},
  volume={45},
  number={1},
  pages={145--160},
  year={1997},
  publisher={Springer}
}

@article{barnhart1998branch,
  title={Branch-and-price: Column generation for solving huge integer programs},
  author={Barnhart, Cynthia and Johnson, Ellis L and Nemhauser, George L and Savelsbergh, Martin WP and Vance, Pamela H},
  journal={Operations research},
  volume={46},
  number={3},
  pages={316--329},
  year={1998},
  publisher={INFORMS}
}

@misc{zou2025realtime_survey,
  title={A Survey of Real-time Scheduling on Accelerator-based Heterogeneous Architecture for Time Critical Applications},
  author={Zou, An and Xu, Yuankai and Ni, Yinchen and Chen, Jintao and Ma, Yehan and Li, Jing and Gill, Christopher and Zhang, Xuan and Jin, Yier},
  year={2025},
  note={arXiv:2505.11970}
}

@article{sankar2021fpga,
  title={FPGA-Based Cost-Effective and Resource Optimized Solution of Predictive Direct Current Control for Power Converters},
  author={Sankar, Deepa and Syamala, Lakshmi and Chembathu Ayyappan, Babu and Kallarackal, Mathew},
  journal={Energies},
  volume={14},
  number={22},
  pages={7669},
  year={2021},
  publisher={MDPI}
}

@article{lucia2017optimized,
  title={Optimized FPGA implementation of model predictive control for embedded systems using high-level synthesis tool},
  author={Lucia, Sergio and Navarro, Denis and Lucia, Oscar and Zometa, Pablo and Findeisen, Rolf},
  journal={IEEE transactions on industrial informatics},
  volume={14},
  number={1},
  pages={137--145},
  year={2017},
  publisher={IEEE}
}

@article{chmiel2023fpga,
  title={FPGA implementation of IEC 61131-3-based hardware-aided timers for programmable logic controllers},
  author={Chmiel, Miroslaw and Czerwinski, Robert and Malcher, Andrzej},
  journal={Electronics},
  volume={12},
  number={20},
  pages={4255},
  year={2023},
  publisher={MDPI}
}

@inproceedings{iacobelli2024detection,
  title={Detection of Ladder Logic Bombs in PLC Control Programs: an Architecture based on Formal Verification},
  author={Iacobelli, Antonio and Rinieri, Lorenzo and Melis, Andrea and Al Sadi, Amir and Prandini, Marco and Callegati, Franco},
  booktitle={2024 IEEE 7th International Conference on Industrial Cyber-Physical Systems (ICPS)},
  pages={1--7},
  year={2024},
  organization={IEEE}
}

\end{document}